\newtheorem{theorem}{Theorem}
\newtheorem{theorem*}{Theorem}
\newtheorem{lemma}[theorem]{Lemma}
\newcommand{\claimproof}[2]%
{\noindent{\em Proof of Claim \ref{#1}.} #2\hspace*{\fill}$\Box$~~~~~\vspace{5mm} }
\newtheorem{problem}[]{Problem}
\newtheorem{exmp}{Example}
{\vspace{1ex}\noindent{\em Pf.}\hspace{0.5em}\def\myproof@name{#1}}%
{\hfill{\tiny \qed\ (\myproof@name)}\vspace{1ex}}
\newenvironment{proof-sketch}{\medskip\noindent{\em Sketch of
Proof.}\hspace*{1em}}{\qed\bigskip}
\newenvironment{proof-attempt}{\medskip\noindent{\em Proof attempt.}\hspace*{1em}}{\bigskip}
\newcommand{\inbrace }[1]{\left\{#1\right\}}           %\inbrace{x+y}  is {x+y}
\newcommand{\F}{\mathbb{F}}
\newcommand{\N}{\mathbb{N}}
\newcommand{\calH}{\mathcal{H}}
\newcommand{\calP}{\mathcal{P}}
\newcommand{\aff}{\mathbb{A}}
\newcommand{\proj}{\mathbb{P}}
\newcommand{\C}{\mathbb{C}}
\renewcommand{\C}{\mathbb{C}}
\renewcommand{\epsilon}{\varepsilon}
\newcommand{\pder}[2]{\partial_{#2}#1}
\newcommand{\ol}[1]{\overline{#1}}
\renewcommand{\dim}{\text{dim~}}
\newcommand{\rk}{\text{rank}} 
\newcommand{\sz}{\text{size}}
\newcommand{\im}{\mathrm{Im}}
\newcommand{\zer}{\mathrm{Zer}}
\newcommand{\m}{\mathfrak{m}}
\begin{document}

\title{\bf Algebraic dependencies and PSPACE algorithms in approximative complexity}
%\titlerunning{Dependence, APS \& hitting-sets}

\author{
Zeyu Guo \thanks{Department of Computer Science \& Engineering, Indian Institute of Technology Kanpur, \texttt{zguo@cse.iitk.ac.in} }
\and
Nitin Saxena \thanks{CSE, IIT Kanpur, \texttt{nitin@cse.iitk.ac.in} }
\and
Amit Sinhababu \thanks{CSE, IIT Kanpur, \texttt{amitks@cse.iitk.ac.in} }
}

%\author[1]{Zeyu Guo}
%\author[1]{Nitin Saxena}
%\author[1]{Amit Sinhababu}
%%\affil[2]{MPI for Informatics \& Saarland University, Department of Computer Science,\\ \texttt{apandey@mpi-inf.mpg.de} }
%\affil[1]{Department of CSE, Indian Institute of Technology Kanpur,\\
%    \texttt{\{zguo, nitin, amitks\}@cse.iitk.ac.in} }
%\authorrunning{Z.\,Guo, N.\,Saxena, A.\ Sinhababu } %mandatory. First: Use abbreviated first/middle names. Second (only in severe cases): Use first author plus 'et. al.'

%\Copyright{Zeyu Guo , Nitin Saxena, Amit Sinhababu }%mandatory, please use full first names. LIPIcs license is "CC-BY";  http://creativecommons.org/licenses/by/3.0/
\date{}
\maketitle

%\subjclass{I.1 Symbolic and Algebraic Manipulation, F.2.1 Numerical Algorithms and Problems, F.1.3 Complexity Measures and Classes, G.1.2 Approximation}% mandatory: Please choose ACM 1998 classifications from http://www.acm.org/about/class/ccs98-html . E.g., cite as "F.1.1 Models of Computation". 
%
%\keywords{algebraic dependence, Jacobian, Arthur-Merlin, approximate polynomial, satisfiability, hitting-set, border VP, finite field, PSPACE}% mandatory: Please provide 1-5 keywords
% Author macros::end %%%%%%%%%%%%%%%%%%%%%%%%%%%%%%%%%%%%%%%%%%%%%%%%%

%\begin{document}

%\thispagestyle{empty}
%\setcounter{page}{0}
\vspace{-.05in}
\begin{abstract}

  Testing whether a set $\mathbf{f}$ of polynomials has an algebraic dependence is a basic problem with several applications. The polynomials are given as algebraic circuits. Algebraic independence testing question is wide open over finite fields (Dvir, Gabizon, Wigderson, FOCS'07). The best complexity known is NP$^{\#\P}$ (Mittmann, Saxena, Scheiblechner, Trans.AMS'14). In this work we put the problem in AM $\cap$ coAM. In particular, dependence testing is unlikely to be NP-hard and joins the league of problems of ``intermediate''  complexity, eg.~graph isomorphism \& integer factoring. Our proof method is algebro-geometric-- estimating the size of the image/preimage of the polynomial map $\mathbf{f}$ over the finite field. A {\em gap} in this size is utilized in the AM protocols.

Next, we study the open question of testing whether every annihilator of $\mathbf{f}$ has zero constant term (Kayal, CCC'09). We give a geometric characterization using Zariski closure of the image of $\mathbf{f}$; introducing a new problem called {\em approximate} polynomials satisfiability (APS). We show that APS is NP-hard and, using projective algebraic-geometry ideas, we put APS in PSPACE (prior best was EXPSPACE via Gr\"obner basis computation). As an unexpected application of this to approximative complexity theory we get-- Over {\em any} field, hitting-set for $\overline{\rm VP}$ can be designed in PSPACE. This solves an open problem posed in (Mulmuley, FOCS'12, J.AMS 2017); greatly mitigating the GCT Chasm (exponentially in terms of space complexity).
\end{abstract}

\vspace{-.30mm}
\noindent
{\bf 1998 ACM Classification:} I.1 Symbolic and Algebraic Manipulation, F.2.1 Numerical Algorithms and Problems, F.1.3 Complexity Measures and Classes, G.1.2 Approximation.

\vspace{-.45mm}
\noindent
{\bf Keywords:} algebraic dependence, Jacobian, Arthur-Merlin, approximate polynomial, satisfiability, hitting-set, border VP, finite field, PSPACE, EXPSPACE, GCT Chasm.

\vspace{-1mm}

\section{Introduction}
\vspace{-0.05in}

Algebraic dependence is a generalization of linear dependence. Polynomials $f_1,\ldots ,f_m \in \F[x_1, \ldots ,x_n]$ are called \emph{algebraically dependent} over field $\F$ if there exists a nonzero polynomial (called \emph{annihilator}) $A(y_1,\ldots ,y_m)\in \F[y_1,\ldots ,y_m]$ such that $A(f_1,\ldots,f_m)=0$. If no $A$ exists, then the given polynomials are called \emph{algebraically independent} over $\F$. The \emph{transcendence degree} (trdeg) of a set of polynomials is the analog of rank in linear algebra. It is defined as the maximal number of algebraically independent polynomials in the set. Both algebraic dependence and linear dependence share combinatorial properties of the matroid structure \cite {ehrenborg1993apolarity}. The algebraic matroid examples may not be linear (esp.~over $\F_p$) \cite{ingleton1971representation}.

The simplest examples of algebraically independent polynomials are $x_1,\ldots,x_n \in \F[x_1, \ldots$, $x_n] $. As an example of algebraically dependent polynomials, we can take $f_1=x$, $f_2=y$ and $f_3=x^2+y^2$. Then, $y_1^2 + y_2^2-y_3$ is an annihilator. The underlying field is crucial in this concept. For example,  polynomials $x+y$ and $x^p + y^p$ are algebraically dependent over $\mathbb{F}_p$, but algebraically independent over $\mathbb{Q}$.  

Thus, the following computational question {\em AD($\F$)} is natural and it is the first problem we consider in this paper: 
Given algebraic circuits $f_1,\ldots ,f_m \in \F[x_1, \ldots ,x_n]$, test if they are algebraically dependent. 
It can be solved in PSPACE using a classical result due to Perron \cite{Per27,Plo05,C76}. Perron proved that given a set of algebraically dependent polynomials, there exists an annihilator whose degree is upper bounded by the product of the degrees of the polynomials in the set. This exponential degree bound on the annihilator is tight \cite{Kay09}. 
 
The annihilator may be quite hard, but it turns out that the decision version is easy over zero (or large) characteristic using a classical result known as the Jacobian criterion \cite{Jac41, BMS13}. The Jacobian efficiently reduces algebraic dependence testing of $f_1,\ldots,f_m$ over $\F$ to linear dependence testing of the differentials $df_1,\ldots,df_m$ over $\F(x_1,\ldots,x_n)$, where we view $df_i$ as the vector $( \frac{\partial f_i} {\partial x_1},\ldots, \frac{\partial f_i} {\partial x_n})$. Placing $df_i$ as the $i$-th row gives us the Jacobian matrix $J$ of $f_1,\ldots,f_m$. If the characteristic of the field is zero (or larger than the product of the degrees $\deg(f_i)$) then the trdeg equals $\rk(J)$.
It follows from \cite{Sch80} that, with high probability, $\rk(J)$ is equal to the rank of $J$ evaluated at a random point in $\F^n$. This gives a simple randomized polynomial time algorithm solving AD($\F$) for certain $\F$.
 
For fields of positive characteristic, if the polynomials are algebraically dependent, then their Jacobian matrix is not full rank. But the converse is not true. There are infinitely many input instances (set of algebraically independent polynomials) for which Jacobian fails. The failure can be characterized by the notion of `inseparable extension' \cite{pandey2016algebraic}. For example, $x^p, y^p$ are algebraically independent over $\mathbb{F}_p$, yet their Jacobian determinant vanishes. Another example is, $\{ x^{p-1}y, xy^{p-1} \}$ over $\mathbb{F}_p$ for prime $p>2$. 
\cite{mittmann2014algebraic} gave a criterion, called Witt-Jacobian, that works over fields of prime characteristic $p$; improving the complexity of independence testing problem from PSPACE to NP$^{\#\P}$. \cite{pandey2016algebraic} gave another generalization of Jacobian criterion that is efficient in special cases.
 
Given that an efficient algorithm to tackle prime characteristic is not in close sight, one could speculate the problem to be NP-hard or even outside the polynomial hierarchy PH. In this work we show that: {\em For finite fields, AD($\F$) is in AM $\cap$ coAM} (Theorem \ref{thm_amcoam}). This rules out the possibility of NP-hardness, under standard complexity theory assumptions \cite{AB09}.
 
 \smallskip\noindent
\textbf{Constant term of the annihilators.}
We come to the second problem {\em AnnAtZero} that we discuss in this paper: Testing if the constant term of {\em every} annihilator, of the set of algebraic circuits $\mathbf{f}=\{f_1,\ldots,f_m\}$, is zero. 
 Note that the annihilators of $\mathbf{f}$ constitute an ideal of the polynomial ring $\F[y_1,\ldots,y_m]$; this ideal is principal when trdeg of $\mathbf{f}$ is $m-1$ \cite[Lem.7]{Kay09}. In this case, we can decide if the constant term of the minimal annihilator is zero in PSPACE, as the {\em unique} annihilator (up to scaling) can be computed in PSPACE. 
 
 If trdeg of $\mathbf{f}$ is less than $m-1$, the ideal of the annihilators of $\mathbf{f}$ is no longer principal. Although the ideal is finitely generated, finding the generators of this ideal is computationally very hard. (Eg.~using Gr\"obner basis techniques, we can do it in EXPSPACE \cite[Sec.1.2.1]{derksen2015computational}.) In this case, can we decide if all the annihilators of $\mathbf{f}$ have constant term zero? {\em We give two equivalent characterizations of AnnAtZero-- one geometric and the other algebraic --using which we devise a PSPACE algorithm to solve it in all cases} (Theorem \ref{thm-aps}).
 
Interestingly, there is an algebraic-complexity application of the above algorithm. {\em We give a PSPACE-explicit construction of a hitting-set of the class $\overline{\rm VP}_{\ol{\F}_q}$} (Theorem \ref{thm-hsg}). $\overline{\rm VP}_{\ol{\F}_q}$ consists of $n$-variate degree $d=n^{O(1)}$ polynomials, over the field $\ol{\F}_q$, that can be `infinitesimally approximated' by size $s=n^{O(1)}$ algebraic circuits. This problem is interesting as natural questions like explicit construction of the normalization map (in Noether's Normalization Lemma NNL) reduce to the construction of a hitting-set of $\overline{\rm VP}$ \cite{mulmuley2017geometric}; which was previously known to be only in EXPSPACE \cite{mulmuley2017geometric,mul12}. 
  This was recently improved greatly, over the field $\mathbb{C}$, by  \cite{forbes2017pspace}. Their proof technique uses real analysis and does not apply to finite fields. We need to develop purely algebraic concepts to solve the finite field case (namely through AnnAtZero), which then apply to {\em any} field. 

\smallskip 
To further motivate the concept of algebraic dependence, we list a few recent problems in computer science.
 The first problem is about constructing an explicit randomness extractor for sources which are polynomial maps over finite fields. Using Jacobian criterion, \cite{DGW09, bib:Dvi09} solved the problem for fields with large characteristic.
 The second application is in the famous polynomial identity testing (PIT) problem. To efficiently design hitting-sets, for some interesting models, 
 \cite{BMS13, ASSS11, KS15} constructed a family of trdeg-preserving maps. For more background and applications of algebraic dependence testing, see \cite{pandey2016algebraic}. The annihilator has been a key concept to prove the connection between hitting-sets and lower bounds \cite{heintz1980testing}, and in  bootstrapping `weak' hitting-sets \cite{AGS17}.

\vspace{-0.05in}
\subsection{Our results}\label{contrib}
\vspace{-0.05in}

 In this paper, we give Arthur-Merlin protocols \& algorithms, with proofs using basic tools from algebraic geometry. The first theorem we prove is about AD($\F_q$).

\begin{theorem}\label{thm_amcoam}
Algebraic dependence testing of circuits in $\F_q[\mathbf{x}]$ is in AM $\cap$ coAM.
\end{theorem}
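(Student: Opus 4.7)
The plan is to translate $\td_{\F_q}(\mathbf{f})$ into a counting gap for two geometrically natural sets of $\F_{q^k}$-points (with $k=\mathrm{poly}(\mathrm{size})$) and then apply the Goldwasser--Sipser set-lower-bound protocol in two different ways. Let $\bar Y\subseteq\ol{\F}_q^m$ be the Zariski closure of $\mathbf{f}(\ol{\F}_q^n)$, so that $t:=\td_{\F_q}(\mathbf{f})=\dim\bar Y$; by Perron's theorem, $\deg\bar Y\le D:=\prod_i\deg f_i=2^{O(\mathrm{size})}$. WLOG $m\le n$, so $\mathbf{f}$ is dependent iff $t\le m-1$. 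Choose $k$ so that $q^k$ dominates $\mathrm{poly}(D,2^n)$.

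To place algebraic \emph{independence} in AM (i.e.\ $\mathrm{AD}(\F_q)\in\mathrm{coAM}$), I would use the image $S_k:=\mathbf{f}(\F_{q^k}^n)\subseteq\F_{q^k}^m$, whose membership is in NP (an $\F_{q^k}$-preimage is a short witness). If $\mathbf{f}$ is dependent, then $S_k\subseteq\bar Y(\F_{q^k})$, and Schwartz--Zippel on a variety of dimension $\le m-1$ and degree $\le D$ gives $|S_k|\le D\cdot q^{k(m-1)}$. If $\mathbf{f}$ is independent, then $\bar Y=\ol{\F}_q^m$, and combining a Schwartz--Zippel bound on the constructible set $\bar Y\setminus\mathbf{f}(\ol{\F}_q^n)$ with a Lang--Weil estimate on generic fibers yields $|S_k|\ge q^{km}/\mathrm{poly}(D,2^n)$. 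For $k$ sufficiently large, this exponential gap lets Goldwasser--Sipser on $S_k$ distinguish the two cases.

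To place \emph{dependence} itself in AM, I would use the collision set $F:=\{(x,x')\in\F_{q^k}^{2n}:\mathbf{f}(x)=\mathbf{f}(x')\}$, whose membership is in $\mathrm{P}$. By Cauchy--Schwarz, $|F|\ge q^{2kn}/|S_k|$, so the dependent-case bound on $|S_k|$ yields $|F|\ge q^{k(2n-m+1)}/D$. Conversely, in the independent case the $m$ equations $f_i(x)-f_i(x')=0$ form a complete intersection of the expected codimension $m$ and B\'ezout degree $\le D$, so Schwartz--Zippel gives $|F|\le D\cdot q^{k(2n-m)}$. The resulting factor-$q^k/D^2$ gap again lets Goldwasser--Sipser on $F$ yield an AM verifier for dependence.

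The hardest step I expect is the lower bound $|S_k|\ge q^{km}/\mathrm{poly}$ in the independent case. One must control both the constructible gap $\bar Y\setminus\mathbf{f}(\ol{\F}_q^n)$ and the Galois-descent gap between $\bar Y(\F_{q^k})$ and $S_k$; the latter requires a Lang--Weil statement, applied uniformly over generic $\F_{q^k}$-rational $y\in\bar Y$, that an absolutely irreducible component of the fiber $\mathbf{f}^{-1}(y)$ becomes $\F_{q^k}$-rational (after possibly enlarging $k$ by a bounded divisibility factor), guaranteeing $\F_{q^k}$-rational preimages for all but a Schwartz--Zippel-negligible set of $y$'s. A secondary, easier point is verifying that the input is well-conditioned enough that the B\'ezout bound on $\deg F$ in the independent case is tight up to an allowable polynomial factor.
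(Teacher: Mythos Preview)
Your coAM protocol, based on the size of the image $S_k=\mathbf f(\F_{q^k}^n)$, uses the same set as the paper. The difference is in how you lower-bound $|S_k|$ in the independent case: you propose Lang--Weil on generic fibers together with a Schwartz--Zippel bound on the constructible complement, and you correctly flag the Galois-descent subtlety as the hardest step. The paper avoids all of this. After reducing to $m=n$, it uses the (Perron-bounded) annihilator $G_i$ of $\{x_i,f_1,\dots,f_n\}$ to show that for every $a$ outside a hypersurface of degree $\le DD'$ the fiber $f^{-1}(f(a))$ over $\ol{\F}_q$ is $0$-dimensional; then B\'ezout gives $|f^{-1}(f(a))|\le D$, and pigeonhole immediately yields $|S_k|\ge(1-o(1))q'^n/D$. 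No Lang--Weil, no constructible-set bookkeeping.

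Your AM protocol, based on the collision set $F=\{(x,x'):\mathbf f(x)=\mathbf f(x')\}$, has a genuine gap. You assert that when $\mathbf f$ is independent the $m$ equations $f_i(x)-f_i(x')=0$ cut out a complete intersection of codimension $m$ in $\aff^{2n}$, so that $|F(\F_{q^k})|\le D\,q^{k(2n-m)}$. This is false. Take $n=m=3$ and $f_1=x_1,\ f_2=x_1x_2,\ f_3=x_1x_3$; these are algebraically independent (the map $(x_1,x_2,x_3)\mapsto(f_1,f_2,f_3)$ is birational onto $\aff^3$), yet $F$ contains the $4$-dimensional linear space $\{x_1=x_1'=0\}$, so $\dim F=4>2n-m=3$ and $|F(\F_{q'})|\ge q'^{\,4}$. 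Your Cauchy--Schwarz lower bound in the dependent case is $|F|\ge q'^{\,2n}/|S_k|\ge q'^{\,n+1}/D=q'^{\,4}/D$, so the two sides overlap and Goldwasser--Sipser on $F$ cannot separate them. The root cause is that $|F|=\sum_b N_b^2$ can be dominated by a few exceptional fibers of large dimension, which algebraic independence does not preclude. The paper's AM protocol sidesteps this by sampling a \emph{random} $a\in\F_{q'}^n$ and applying the set-lower-bound protocol to the single fiber $f^{-1}(f(a))$: with high probability $a$ misses the bad locus (contained in a bounded-degree hypersurface, by the same annihilator-of-$\{x_i,\mathbf f\}$ argument), giving $N_{f(a)}\le D$ in the independent case versus $N_{f(a)}>2D$ in the dependent case.
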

 
This result vastly improves the current best upper bound known for AD($\F_q$)-- from being `outside' the polynomial hierarchy (namely NP$^{\#\P}$ \cite{mittmann2014algebraic}) to `lower' than the second-level of polynomial hierarchy (namely AM $\cap$ coAM). This rules out the possibility of  AD($\F_q$) being NP-hard (unless polynomial hierarchy collapses to the second-level \cite{AB09}). Recall that, for zero or large characteristic $\F$, AD($\F$) is in coRP (Section \ref{sec-prelim}). We conjecture such a result for AD($\F_q$) too.

\smallskip
Our second result is about the problem AnnAtZero (i.e.~testing whether all the annihilators of given $\mathbf{f}$ have constant term zero). A priori it is unclear why it should  have complexity better than EXPSPACE (note: ideal membership is EXPSPACE-complete \cite{mayr1982complexity}). Firstly, we relate to a (new) version of polynomial system satisfiability, over the algebraic closure $\overline{\F}$:

\begin{problem}[Approximate polynomials satisfiability (APS)]
 Given algebraic circuits $f_1,\ldots,f_m \in \F[x_1,\ldots,x_n]$, does there exist $ \mathbf\beta \in \overline{\F}(\epsilon)^n $ such that for all $i$, $f_i(\mathbf{\beta})$ is in the ideal $\epsilon \overline{\F}[\epsilon] $? If yes, then we say that $\mathbf{f}:= \{f_1,\ldots,f_m\}$ is in APS.
\end{problem}

It is easy to show: Function field $\ol{\F}(\epsilon)$ here can be equivalently replaced by {\em Laurent polynomials} $\overline{\F}[\epsilon, \epsilon^{-1}]$, or, the field $\ol{\F}((\epsilon))$ of {\em formal Laurent series} (use mod $\epsilon \overline{\F}[\epsilon]$). A reason why these objects appear in algebraic complexity can be found in \cite[Sec.5.2]{burgisser2004complexity} \& \cite[Sec.5]{LL89}. They help algebrize the notion of `infinitesimal approximation' (in real analysis think of $\epsilon\rightarrow0$ \& $1/\epsilon\rightarrow\infty$). A notable computational issue involved is that the degree bound of $\epsilon$ required for $\beta$ is exponential in the input size \cite[Prop.3]{LL89}; this may again be a ``justification'' for APS requiring that much space.

\smallskip
Classically, the {\em exact} version of APS has been extremely well-studied-- Does there exist $\mathbf{\beta} \in \overline{\F}^n$ such that for all $i$, $f_i(\mathbf{\beta})=0 $? This is what Hilbert's Nullstellensatz (HN) characterizes and yields an impressive PSPACE algorithm \cite{koiran1996hilbert, kollar1988sharp}. Note that if system $\mathbf{f}$ has an exact solution, then it is trivially in APS. But the converse is not true. For example,  $\{x, xy-1\}$ is in APS, but there is no exact solution in $\overline{\F}$. To see the former, assign $x=\epsilon$ and $y= 1/\epsilon$. Also, the instance $\{x,x+1\}$ is neither in APS nor has an exact solution. Finally, note that if we restrict $ \mathbf{\beta}$ to come from $\overline{\F}[\epsilon]^n$ then APS becomes equivalent to exact satisfiability and HN applies. This can be seen by going modulo $\epsilon \overline{\F}[\epsilon]$, as the quotient $\overline{\F}[\epsilon]/\epsilon \overline{\F}[\epsilon]$ is $\overline{\F}$.

Coming back to AnnAtZero, we show that it is equivalent both to a geometric question and to deciding APS. This gives us, with more work, the following surprising consequence.

\begin{theorem}\label{thm-aps}
 APS is NP-hard and is in PSPACE.
\end{theorem}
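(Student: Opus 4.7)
\begin{proof-sketch}

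I would handle the two parts separately.

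\emph{NP-hardness.} The plan is to reduce 3-SAT to APS via the standard arithmetization. Given a 3-CNF $\phi$ on $x_1,\ldots,x_n$, let $\mathbf{f}$ consist of the Boolean constraints $x_i(x_i-1)$ and the standard clause polynomials $g_C(x)=\prod_{\ell\in C}(1-\ell)$. If $\phi$ is satisfied by $c\in\{0,1\}^n$, the constant curve $\beta := c$ places $\mathbf{f}$ in APS. Conversely, if $\beta\in\overline{\F}(\epsilon)^n$ witnesses APS, then each $\beta_i$ must lie in $\overline{\F}[\epsilon]$: writing $\beta_i=p/q$ in lowest terms, a factor of $\epsilon$ in $q$ would produce an uncancellable $\epsilon$-pole in $\beta_i(\beta_i-1)$, contradicting $\beta_i(\beta_i-1)\in\epsilon\overline{\F}[\epsilon]$. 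Setting $\epsilon=0$ yields $c_i:=\beta_i(0)\in\{0,1\}$, and the clause conditions $g_C(\beta)\in\epsilon\overline{\F}[\epsilon]$ force $g_C(c)=0$, so $c$ satisfies $\phi$.

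\emph{PSPACE, the bridge.} I would first establish
$\mathbf{f}\in\mathrm{APS}\;\Leftrightarrow\;0\in\overline{\mathbf{f}(\overline{\F}^n)}\;\Leftrightarrow\;\mathrm{AnnAtZero}(\mathbf{f})$.
The left equivalence is a formal curve-selection statement over algebraically closed fields; the right equivalence uses that the vanishing ideal of the Zariski closure of the image is exactly the annihilator ideal $I_{\mathbf{f}}$, so every annihilator vanishes at $0$ iff $0$ lies in their common vanishing locus. Consequently, the negation of APS amounts to the existence of $A'(y)\in(y_1,\ldots,y_m)$ with $A'(f_1,\ldots,f_m)=-1$ in $\overline{\F}[x]$.

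\emph{PSPACE, the algorithm.} The technical core is a \emph{projective} degree bound. By an elimination-theoretic analysis of the bi-projective graph closure $\widetilde\Gamma\subseteq\mathbb{P}^n\times\mathbb{P}^m$ of $\mathbf{f}$, I would prove that whenever a witness $A'$ exists, one exists with $\deg A'\leq D$ for some singly exponential $D$. Under this bound, existence of $A'$ reduces to feasibility of an exponential-size linear system whose unknowns are the coefficients of $A'$ and whose matrix entries are monomial coefficients of the form $[x^\gamma]\,f_1^{\alpha_1}\cdots f_m^{\alpha_m}$, each extractable from the input circuits in PSPACE by standard coefficient-extraction that works over finite fields. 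Parallel linear algebra (Berkowitz's NC$^2$ determinant/rank procedure) decides feasibility of a PSPACE-uniform system of exponential dimension in polylogarithmic space in that dimension, hence polynomial space overall.

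The main obstacle is the projective step. The naive guess ``APS iff $V(\overline{f_1},\ldots,\overline{f_m})\neq\emptyset$ in $\mathbb{P}^n$'' is false: $\{x^2,x^2+1\}$ admits the projective zero $[0:0:1]$ yet is not in APS, since $y_2-y_1-1$ is an annihilator with nonzero constant term. A correct treatment must saturate the bi-projective defining equations of $\widetilde\Gamma$ by the boundary forms $x_0$ and $y_0$, and quantifying this saturation effectively is what yields both the degree bound $D$ and the reduction into a form that PSPACE can handle rather than only EXPSPACE.

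\end{proof-sketch}
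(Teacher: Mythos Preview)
Your NP-hardness argument is correct and more self-contained than the paper's, which simply invokes the equivalence $\mathrm{APS}\Leftrightarrow\mathrm{AnnAtZero}$ (your ``bridge'') and then cites Kayal's NP-hardness of AnnAtZero. One small imprecision: from $\epsilon\nmid q$ you only get that $\beta_i$ is regular at $\epsilon=0$ (i.e.\ $\beta_i\in\overline{\F}[[\epsilon]]\cap\overline{\F}(\epsilon)$), not that $\beta_i\in\overline{\F}[\epsilon]$; but this is all you use when you evaluate at $\epsilon=0$, so the argument stands.

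Your bridge $\mathrm{APS}\Leftrightarrow O\in\overline{\im(f)}\Leftrightarrow\mathrm{AnnAtZero}$ is exactly the paper's Lemma~\ref{lem_geom} and Theorem~\ref{thm_approx}.

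The PSPACE algorithm, however, has a genuine gap at precisely the point you flag as ``the main obstacle.'' You need: if some annihilator has nonzero constant term, then one of singly-exponential degree does. Equivalently, with $I$ the annihilator ideal, you need $O\notin V(I)\Rightarrow O\notin V(I_{\le D})$. You propose to extract this from a saturation analysis of the bi-projective graph closure, but you do not carry it out, and it is not clear that this route yields an effective bound without essentially redoing elimination theory (which is where the doubly-exponential danger lives). Your own counterexample shows the naive projective closure of the $f_i$ is wrong, and ``quantifying the saturation effectively'' is the entire difficulty; as written this step is an assertion, not an argument.

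The paper sidesteps this altogether. Instead of bounding degrees in the full annihilator ideal, it randomly replaces $f_1,\dots,f_m$ by $k{+}1$ generic linear combinations $g_1,\dots,g_{k+1}$, where $k=\mathrm{trdeg}(\mathbf f)$. With high probability this preserves both the trdeg and the answer to AnnAtZero (Theorem~\ref{thm-randReduct}); once $m=k{+}1$, the annihilator ideal is \emph{principal}, Perron's bound controls the unique generator, and PSPACE linear algebra finishes. The correctness of the reduction is where the projective geometry actually lives: one must show that if $O\notin\overline{\im(f)}$ then $O'\notin\overline{\pi(\overline{\im(f)})}$ for a random linear $\pi$, and the proof (Lemma~\ref{lem_projcriterion}) passes to projective closures, analyzes the join $J(V_c,W_H)$, and uses projection from $O$ to the hyperplane at infinity. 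Amusingly, the paper notes that its method \emph{a posteriori} yields exactly the degree bound you want, namely $V(I)=V(I_{\le d})$ with $d=(\max_i\deg f_i)^k$; but it gets there through the random reduction, not through a direct elimination on the bi-projective graph.
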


We apply this to design hitting-sets  and solving NNL (refer \cite{mulmuley2017geometric} for the background).

\begin{theorem}\label{thm-hsg}
There is a PSPACE algorithm that (given input $n,s,r$ in unary \& suitably large $\F_q$) outputs a set, of points from $\F_q ^n$ of size poly$(nsr, \log q)$, that hits all $n$-variate degree-$r$ polynomials over $\ol{\F}_q $ that can be infinitesimally approximated by size $s$ circuits.
\end{theorem}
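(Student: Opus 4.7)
The plan is to reduce the hitting-set construction to iterated PSPACE calls to the APS decision procedure of Theorem \ref{thm-aps}. The first ingredient is a universal-circuit polynomial $U(\mathbf{w}, \mathbf{x}) \in \F_q[\mathbf{w}, \mathbf{x}]$ of degree $\text{poly}(s, r)$ in $L = \text{poly}(s)$ weight variables with the following property: a polynomial $f \in \ol{\F}_q[\mathbf{x}]_{\le r}$ lies in $\overline{\rm VP}_{\ol{\F}_q}(n, s, r)$ iff there exists $\mathbf{w}^* \in \ol{\F}_q[\epsilon]^L$ with $U(\mathbf{w}^*, \mathbf{x}) - f(\mathbf{x}) \in \epsilon\, \ol{\F}_q[\epsilon][\mathbf{x}]$; the ``only if'' direction uses Artin-style approximation to replace general $\ol{\F}_q[[\epsilon]]$-weights by polynomial ones. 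Because the image of $\mathbf{w} \mapsto U(\mathbf{w}, \mathbf{x})$ has Zariski dimension $\le L = \text{poly}(s)$ inside the $\binom{n+r}{n}$-dimensional ambient space $\ol{\F}_q[\mathbf{x}]_{\le r}$, a standard dimension-counting plus Schwartz--Zippel argument shows that over any sufficiently large $\F_q$ a random $H \subseteq \F_q^n$ of size $N = \text{poly}(n s r, \log q)$ is, with positive probability, a hitting set for $\overline{\rm VP}_{\ol{\F}_q}(n, s, r)$.

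The PSPACE algorithm enumerates candidate tuples $H = (\mathbf{a}_1, \ldots, \mathbf{a}_N) \in (\F_q^n)^N$ in lexicographic order and tests each via a single APS instance. Introducing fresh indeterminates $\mathbf{y} = (y_1, \ldots, y_n)$ and $z$, we take
\begin{equation*}
  F_j(\mathbf{w}, \mathbf{y}, z) := U(\mathbf{w}, \mathbf{a}_j), \quad j \in [N], \qquad F_{N+1}(\mathbf{w}, \mathbf{y}, z) := U(\mathbf{w}, \mathbf{y}) \cdot z - 1,
\end{equation*}
and ask whether $\{F_1, \ldots, F_{N+1}\}$ admits an APS solution $(\mathbf{w}^*, \mathbf{y}^*, z^*) \in \ol{\F}_q(\epsilon)^{L+n+1}$. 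In the easy direction, any nonzero $f \in \overline{\rm VP}_{\ol{\F}_q}(n, s, r)$ with $f|_H = 0$ supplies a witness: take $\mathbf{w}^* \in \ol{\F}_q[\epsilon]^L$ from an approximative representation of $f$, any $\mathbf{y}^* \in \ol{\F}_q^n$ with $f(\mathbf{y}^*) \ne 0$, and $z^* := 1/f(\mathbf{y}^*)$. The APS instance has description size $\text{poly}(n s r, \log q, N)$, hence by Theorem \ref{thm-aps} the test runs in PSPACE; the outer enumeration over $H$ also fits in polynomial space, and the first accepted tuple is output.

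The technical heart, and main obstacle, is the reverse direction: extracting from an arbitrary APS witness $(\mathbf{w}^*, \mathbf{y}^*, z^*)$ a genuine nonzero $f \in \overline{\rm VP}_{\ol{\F}_q}(n, s', r)$ (with $s' = \text{poly}(s)$) that vanishes on $H$. The subtlety is that $\mathbf{w}^*$ and $z^*$ may carry $\epsilon$-denominators, so the $\mathbf{x}$-coefficients of $U(\mathbf{w}^*, \mathbf{x})$ need not lie in $\ol{\F}_q[[\epsilon]]$ and a naive ``$\bmod \epsilon$ reduction'' is not well-defined. The natural remedy -- multiply through by $\epsilon^{-v}$ for $v := \min_\alpha v_\epsilon(c_\alpha)$ the minimum $\epsilon$-valuation among the $\mathbf{x}$-coefficients of $U(\mathbf{w}^*, \mathbf{x})$, then take the limit -- produces a nonzero polynomial in $\overline{\rm VP}_{\ol{\F}_q}$ (the $\epsilon^{-v}$ factor is absorbed by a repeated-squaring circuit of size $O(\log |v|)$), and it vanishes on $H$ whenever $v \le 0$; the residual case ($v \ge 1$, i.e., $U(\mathbf{w}^*, \mathbf{x})$ itself divisible by $\epsilon$ with $z^*$ supplying a compensating pole) is handled by augmenting the APS instance with further Rabinowitsch-style auxiliaries that enforce the strict valuation inequality $v_\epsilon(U(\mathbf{w}^*, \mathbf{a}_j)) > v$ for every $j$, or by iterating with rescaled weights. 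This algebraic denominator-clearing argument -- replacing the real-analytic limit arguments of \cite{forbes2017pspace} -- is what allows the construction to go through in positive characteristic, and hence over any field.
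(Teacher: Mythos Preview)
Your high-level strategy---enumerate candidate sets $H$ and certify each via an APS instance built from a universal circuit---is exactly the paper's, but the reverse direction of your APS criterion has a real gap that your proposed patches do not close. The system $\{U(\mathbf{w},\mathbf{a}_j)\}_j\cup\{U(\mathbf{w},\mathbf{y})\,z-1\}$ admits spurious solutions even when $H$ \emph{is} a hitting set: pick any $\mathbf{w}_0$ with $p(\mathbf{x}):=U(\mathbf{w}_0,\mathbf{x})$ nonzero and \emph{not} vanishing on $H$; since the universal family is closed under scalars, some $\mathbf{w}^*$ yields $U(\mathbf{w}^*,\mathbf{x})=\epsilon\,p(\mathbf{x})$, whence every $U(\mathbf{w}^*,\mathbf{a}_j)\in\epsilon\aff[\epsilon]$ trivially, and $z^*:=(\epsilon\,p(\mathbf{y}^*))^{-1}$ at any $\mathbf{y}^*\in\aff^n$ with $p(\mathbf{y}^*)\ne0$ satisfies the Rabinowitsch constraint. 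Your suggested fix ``enforce $v_\epsilon(U(\mathbf{w}^*,\mathbf{a}_j))>v$'' is not a polynomial condition---valuation comparisons with an unknown, possibly exponential $v$ cannot be encoded as APS constraints---and ``iterating with rescaled weights'' is circular, since the rescaling factor depends on the unknown solution.

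The paper closes this gap with two coupled moves (Theorem~\ref{thm_HITTING}). It drops $z$ and demands $\Psi(\alpha,\beta)-1\in\epsilon\aff[\epsilon]$ directly; the forward direction still works because the universal family is closed under scalar multiples, so one may pre-normalise. Crucially, it then \emph{constrains the witness point} by adding $\beta_i^{r+1}-1\in\epsilon\aff[\epsilon]$ for each $i$. This forces each $\beta_i\in\aff[[\epsilon]]$ (indeed $\beta_i\bmod\epsilon$ is an $(r{+}1)$-th root of unity), so that $\Psi(\alpha,\beta)\equiv1\bmod\epsilon$ genuinely forces the minimum $\epsilon$-valuation among the $\mathbf{x}$-coefficients of $\Psi(\alpha,\mathbf{x})$ to be $\le0$. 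That is precisely the missing inequality: once $v\le0$, the leading $\epsilon$-layer of $\Psi(\alpha,\mathbf{x})$ is a nonzero member of $\calP$ that vanishes on every $\mathbf{v}_i$, and the equivalence goes through. The roots-of-unity device is the key idea you are missing; without it (or an equivalent way to pin the witness to an approximate $\aff^n$-point while simultaneously eliminating the pole-supplying $z$), the APS test does not characterise hitting sets.
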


\noindent
{\bf More applications?} The exact polynomials satisfiability question HN (over $\ol{\F}$) is highly expressive and, naturally, most computer science problems get expressed that way. We claim that in a similar spirit, the APS question expresses those computer science problems that involve `infinitesimal approximation'. 
One prominent example is the concept of {\em border rank} of tensor polynomials (used in matrix multiplication algorithms and GCT, see \cite{burgisser2013algebraic, landsberg2012tensors, le2014powers}). Border rank computation of a given tensor (over $\ol{\F}$) can easily be reduced to an APS instance and, hence, now solved in PSPACE; this matches the complexity of tensor rank itself \cite{schaefer2016complexity}. From the point of view of Gr\"obner basis theory, APS is a problem that seems a priori much harder than HN. Now that both of them have a PSPACE algorithm, one may wonder whether it can be brought all the way down to NP or AM? (In fact, $\text{HN}_{\C}$ is known to be in AM, conditionally under GRH \cite{koiran1996hilbert}.)

Our methods in the proof of Theorem \ref{thm-aps} imply an interesting ``degree bound'' related to the (prime) ideal $I$ of annihilators of polynomials $\mathbf f$. Namely, $I=\sqrt{I_{\le d}}$, where $I_{\le d}$ refers to the subideal generated by degree $\le d$ polynomials of $I$, $d$ is the Perron-like bound $(\max_{i\in [m]}\deg(f_i))^k$, and $k:= \text{ trdeg}(\mathbf f)$. This is equivalent to the geometric fact, which we prove, that the varieties defined by the two ideals $I$ and $I_{\le d}$ are equal (Theorem \ref{thm-randReduct}). This again is an exponential improvement over what one expects to get from the general Gr\"obner basis methods; because, the generators of $I$ may well have doubly-exponential degree.

The hitting-set result (Theorem \ref{thm-hsg}) can be applied to compute, in PSPACE, the explicit system of parameters (esop) of the {\em invariant ring} of the variety $\Delta[\det,s]$, over $\ol{\F}_q$, with a given group action \cite[Thm.4.9]{mulmuley2017geometric}. Also, we can now construct, in PSPACE, polynomials in $\F_q[x_1,\dots,x_n]$ that cannot even be approximated by `small' algebraic circuits. Such results were previously known only for characteristic zero fields, see \cite[Thms.1.1-1.4]{forbes2017pspace}. Bringing this complexity down to P is the longstanding problem of blackbox PIT (\& lower bounds), see \cite{Saxena09, shpilka2010arithmetic, Saxena13}. Mulmuley \cite{mul12} pointed out that small hitting-sets for $\ol{\rm VP}$ can be designed in EXPSPACE which is a far worse complexity than that for VP. He called it the GCT Chasm. We bridge it somewhat, as the proof of Theorem \ref{thm-hsg} shows that small hitting-sets for $\ol{\rm VP}_{\ol{\F}}$ can be designed in PSPACE (like those for VP) for {\em any} field $\F$.

%Other prominent application is in proving degree bounds for the {\em annihilator ideal}.  If trdeg of $\mathbf{f}$ is less than $m-1$, the ideal $I$ of the annihilators of $\mathbf{f}$ is no longer principal. Although the ideal is finitely generated, finding the generators of this ideal is computationally very hard (eg.~using Gr\"obner basis techniques, we can do it in EXPSPACE).

\vspace{-0.05in}
 \subsection{Proof ideas}\label{idea}
\vspace{-0.05in}
 
\smallskip\noindent
\textbf{Proof idea of Theorem \ref{thm_amcoam}.}  
Suppose we are given algebraic circuits $\mathbf{f}:= \{f_1,\ldots,f_m\}$ computing in $\F_q[x_1,\dots,x_n]$. For the AM and coAM protocols, we consider the following system of equations over a `small' extension $\F_{q'}$:
  
 For $b=(b_1,\dots,b_n)\in \F_{q'}^n$, define the system of equations $f_i(x_1,\dots,x_n)=b_i$, for $i\in [m]$. We denote the number of solutions of the above system in $\F_{q'}^n$ as $N_b$. Let $f:\F_{q'}^n\to \F_{q'}^m$ be the polynomial map $a\mapsto (f_1(a),\dots,f_m(a))$.

{\em AM gap}. [Theorem \ref{thm_am}] We establish bounds for the number $N_{f(a)}$, where $a$ is a random point in $\F_{q'}^n$. If $f_1,\dots,f_m$ are independent, we show that  $N_{f(a)}$ is relatively small. Whereas, if the polynomials are algebraically dependent then $N_{f(a)}$ is much more. 

Assume $\mathbf{f}$ are algebraically independent. Wlog (see the full version of    
\cite[Sec.2]{pandey2016algebraic}) we can assume that $m=n$ and for all $i\in[n]$, $\{x_i,f_1,\ldots,f_n\}$ are algebraically dependent. 
The first step is to show that the zeroset defined by the system of equations, for random $f(a)$, has dimension $\leq 0$. This is proved using the Perron degree bound on the annihilator of $\{x_i,f_1,\ldots,f_n\}$. 
Next, one can apply an affine version of Bezout's theorem to upper bound $N_{f(a)}$. 
On the other hand, suppose $\mathbf{f}$ are algebraically dependent, say with annihilator $Q$. Let $\im(f):=f(\F_{q'}^n)$ be the image of $f$. Since $Q$ vanishes on $\im(f)$, we know that $\im(f)$ is relatively small, whence we deduce that $N_{f(a)}$ is large for `most' $a$'s.
 
{\em coAM gap}. [Theorem \ref{thm_coam}] We pick a random point $b=(b_1,\dots,b_m)\in \F_{q'}^m$ and bound $N_b$, which is the number of solutions of the system defined above. In the dependent case, we show that $N_b=0$ for `most' $b$'s. But in the independent case, we show that $N_b\ge1$ for `many' (may be not `most'!) $b$'s. The ideas are based on those sketched above. %The gap in the size of the set $\im(f)$ is used to get the AM protocol.

The two kinds of gaps shown above are based on the set %$S_{\mathbf y}:= \{\mathbf x\in \F_{q'}^n \,\mid\, f(\mathbf x)= \mathbf y  \}$. 
$f^{-1}(f(\mathbf x))$ resp.~$\im(f)$. Note that membership in either of these sets is testable in NP (the latter requires nondeterminism). Based on this and the gaps between the respective cardinalities, we can invoke Lemma \ref{lem_amprotocol} and devise the AM and coAM protocols for AD($\F_{q'}$), which also apply to AD($\F_{q}$).
 
{\em Remark--} One advantage in our problem is that we could sample a random point in the set $\im(f)$. In contrast, it is not clear how to sample a random point in the zeroset $\zer(\mathbf f):= \{\mathbf x\in \F_{q'}^n \,\mid\, f(\mathbf x)= \mathbf 0\}$. Thus, we manage to side-step the NP-hardness associated with most zeroset properties. Eg.~computing the dimension of $\zer(\mathbf f)$ is NP-hard.

\smallskip\noindent
\textbf{Proof idea of Theorem \ref {thm-aps}.} 
Let algebraic circuits $\mathbf f:= \{f_1,\dots,f_m\}$ in $\F[x_1,\dots,x_n]$ be given over a field $\F$.
We want to determine if the constant term of every annihilator for $\mathbf f$ is zero. Redefine the polynomial map $f:\ol{\F}^n\to \ol{\F}^m$; $a\mapsto (f_1(a),\dots,f_m(a))$. For a subset $S$ of an affine (resp.~projective) space, write $\overline{S}$ for its {\em Zariski closure} in that space, i.e.~it is the smallest subset that contains $S$ and equals the zeroset $\zer(I)$ of some polynomial ideal $I$. 

{\em APS vs AnnAtZero.} [Theorem \ref{thm_approx}] Now, we interpret the problem AnnAtZero in a geometric way through Lemma \ref{lem_geom}:

 The constant term of every annihilator of $\mathbf f$ is zero iff the origin point $\mathbf 0\in \ol{\im(f)}$. 
 
This has a simple proof using the ideal-variety correspondence \cite{Har92}. Note that the stronger condition $\mathbf 0\in\im(f)$ is equivalent to the existence of a common solution to the equations $f_i(x_1,\dots,x_n)$ $=0$, $i=1,\dots,m$. The latter problem (call it HN for Hilbert's Nullstellensatz) is known to be in AM if  $\F=\mathbb{Q}$ and GRH is assumed \cite{koiran1996hilbert}.
However, $\im(f)$ is not necessarily Zariski closed; equivalently, it may be strictly smaller than $\ol{\im(f)}$. So, we need new ideas to test $\mathbf{0}\in \overline{\im(f)}$. 
  
Next, we observe that although $\mathbf{0}\in \overline{\im(f)}$ is not equivalent to the existence of a solution $\mathbf x\in \ol{\F}^n$ to $f(\mathbf x)= \mathbf 0$, it {\em is} equivalent to the existence of an  ``approximate solution'' $\mathbf x\in \ol{\F}(\epsilon)^n$, which is an $n$-tuple of rational functions in a formal variable $\epsilon$. The proof idea of this uses a degree bound on $\epsilon$ due to \cite{LL89}. We called this problem APS. As AnnAtZero problem is already known to be NP-hard \cite{Kay09}, APS is also NP-hard. 

{\em Upper bounding APS.} We now know that: Solving APS for $\mathbf f$ is equivalent to solving AnnAtZero for $\mathbf f$. AnnAtZero was previously known to be in PSPACE in the special case when the trdeg $k$ of $\F(\mathbf f)/\F$ equals $m$ or $m-1$, but the general case remained open (best being EXPSPACE).

In this work we prove that AnnAtZero is in PSPACE even when $k<m-1$. Our simple idea is to reduce the input to a smaller $m=k+1$ instance, by choosing new polynomials $g_1,\dots,g_{k+1}$ that are random linear combinations of $f_i$'s. We show that with high probability, replacing $\{f_1,\dots,f_m\}$ by $\{g_1,\dots,g_{k+1}\}$ preserves YES/NO instances as well as the trdeg. This gives a randomized poly-time reduction from the case $k<m-1$ to $k=m-1$ (Theorem \ref{thm-randReduct}). The latter has a standard PSPACE algorithm.

For notational convenience view $\ol{\F}$ as the {\em affine line} $\aff$.
Define $V:=\overline{\im(f)}\subseteq \aff^m$. Proving that the above reduction (of $m$) does preserve YES/NO instances amounts to proving the following geometric statement: If $V$ does not contain the origin $O\in\aff^m$, then with high probability, the variety $V':=\overline{\pi(V)}$ does not contain the origin $O'\in\aff^{k+1}$ either, where $\pi:\aff^m\to \aff^{k+1}$ is a random linear map. 

As $\pi$ is picked at random, the kernel $W$ of $\pi$ is a random linear subspace of $\aff^m$.
We have $O'\not\in \pi(V)$ whenever $V\cap  W=\emptyset$, but this is not sufficient for proving $O'\not\in \overline{\pi(V)}$, since $V$ may ``get arbitrarily close to $W$'' in $\aff^m$ and meet $W$ ``at infinity''. Inspired by this observation, we consider projective geometry instead of affine geometry, and prove that $O'\not\in V'$ holds as long as the projective closure of $V$ and that of $W$ are disjoint. The proof uses the construction of a projective subvariety-- the {\em join} --to characterize $\pi^{-1}(V')$, and eventually rules out $W\subseteq \pi^{-1}(V')$ (Lemma \ref{lem_projcriterion}).

Moreover, we show that this holds with high probability if $O\not\in V$: by (repeatedly) using the fact that a generic (=random) hyperplane section reduces the dimension of a variety by one.

\smallskip\noindent
\textbf{Proof idea of Theorem \ref{thm-hsg}.} 
Define $\aff:=\ol{\F}_q$ and assume wlog $q\ge\Omega(sr^{2})$ \cite{AL86}.
\cite[Thm.4.4]{heintz1980testing} showed that a hitting-set, of size $h:= O(s^2n^2\log q)$ in $\F_q^n$, {\em exists} for the class of degree-$r$ polynomials, in $\aff[x_1,\dots,x_n]$, that can be infinitesimally approximated by size-$s$ algebraic circuits. So, we can search over all possible subsets of size $h$ from $\F_q^n$ and `most' of them are hitting-sets.

How do we certify that a candidate set $\calH$ is a hitting-set? The idea is to use universal circuits. A {\em universal circuit} has $n$ essential variables $\mathbf x=\{x_1,\ldots,x_n\}$ and $s':=O(sr^4)$ auxiliary variables $\mathbf y= \{y_1,\ldots, y_{s'}\}$. We can fix the auxiliary variables, from $\aff(\epsilon)$, in such a way so that it can output any homogeneous circuit of size-$s$, approximating a degree-$r$ polynomial in $\ol{\rm VP}_\aff$. Given a universal circuit $\Psi$, certification of a hitting-set $\calH$ is based on the following observation, that follows from the definitions: 

Candidate set $\calH=: \{\mathbf v_1,\ldots, \mathbf v_h\}$ is a hitting-set iff~~  
$\forall \mathbf y\in \aff(\epsilon)^{s'} $, $\,\Psi(\mathbf{y,x})\notin \epsilon\aff[\epsilon][\mathbf x] $ $ \,\Rightarrow\, $ $\exists i\in[h]$, $ \Psi(\mathbf y, \mathbf v_i)\notin \epsilon\aff[\epsilon] $.

\smallskip
Equivalently: Candidate set $\calH= \{\mathbf v_1,\ldots, \mathbf v_h\}$ is {\em not} a hitting-set iff~~  
$\exists \mathbf y\in \aff(\epsilon)^{s'} $, $\,\Psi(\mathbf{y,x})\notin \epsilon\aff[\epsilon][\mathbf x]\,$ and  $\,\forall i\in[h]$, $ \Psi(\mathbf y, \mathbf v_i)\in \epsilon\aff[\epsilon] $.

\smallskip
Note that this hitting-set certification is more challenging than the one against polynomials in VP; because the degree bounds for $\epsilon$ are exponentially high and moreover, we do not know how to frame the first `non-containment' condition as an APS instance. To translate it to an APS instance, our key idea is the following. 

Pick $q\ge\Omega(s'r^2)$ so that a hitting-set exists, in $\F_q^n$, that works against polynomials approximated by the specializations of $\Psi$. 
Suppose $\Psi(\mathbf{\alpha,x})$ is not in $\epsilon\aff[\epsilon][\mathbf x]$, for some $\mathbf \alpha\in \aff(\epsilon)^{s'}$. This means that we can write it as $\sum_{-m\le i\le m'}$ $\epsilon^i g_i(\mathbf x)$ with $g_{-m}\ne0$ and $m\ge0$. Clearly, $\epsilon^m\cdot \Psi(\mathbf{\alpha,x})$ infinitesimally approximates the nonzero polynomial $g_{-m} \in \aff[\mathbf x]$.
By the conditions on $\Psi$, we know that $g_{-m}$ is a homogeneous degree-$r$ polynomial (and approximative complexity $s'$). Thus, by \cite{Sch80}, there exists a $\mathbf \beta\in\F_q^n$ such that $g_{-m}(\mathbf \beta)=:a$ is a nonzero element in $\aff$. We can normalize by this and consider $a^{-1}\epsilon^m\cdot \Psi(\mathbf{y,x})$, which evaluates to $1+\epsilon\aff[\epsilon]$ at $(\mathbf{\alpha,\beta})$. Since this normalization factor only affects the auxiliary variables $\mathbf y$, we get another equivalent criterion:

Candidate set $\calH= \{\mathbf v_1,\ldots, \mathbf v_h\}$ is {\em not} a hitting-set iff~~  
$\exists \mathbf y\in \aff(\epsilon)^{s'} $ and $\exists \mathbf x\in \F_q^n $ such that, $\,\Psi(\mathbf{y,x}) - 1 \in \epsilon\aff[\epsilon]\,$ and  $\,\forall i\in[h]$, $ \Psi(\mathbf y, \mathbf v_i)\in \epsilon\aff[\epsilon] $.

\smallskip
We reached closer to APS, but how do we implement $\exists? \mathbf x\in \F_q^n $ (it is an exponential space)? 

The idea is to rewrite it, instead using the $(r+1)$-th roots of unity $Z_{r+1}\subset\aff$, as: $\exists \mathbf x\in \aff(\epsilon)^n$, $\forall i\in[n]$, $x_i^{r+1}-1 \in \epsilon\aff[\epsilon]$. This gives us a criterion that is an instance of APS with $n+h+1$ input polynomials (Theorem \ref{thm_HITTING}). By Theorem \ref{thm-aps} it can be done in PSPACE; finishing the proof. Moreover, this PSPACE algorithm idea is independent of the field characteristic. (Eg.~it can be seen as an alternative to \cite{forbes2017pspace} over the complex field.)

\vspace{-0.05in}
\section{Preliminaries}\label{sec-prelim}
\vspace{-0.05in}

{\bf Jacobian.}
Although this work would not need it, we define the classical Jacobian: For polynomials $\mathbf{f} = \inbrace{f_1, \cdots, f_m}$ in $\F[x_1,\cdots, x_n]$, {\em Jacobian} is the matrix $\mathcal{J}_{\mathbf{x}}(\mathbf{f}) := (\pder{f_i}{x_j})_{m \times n}$, where $\pder{f_i}{x_j} := \partial f_i/ \partial x_j$. 

Jacobian criterion \cite{Jac41,BMS13} states: For degree $\le d$ and trdeg $\le r$ polynomials $\mathbf{f}$, if char$(\F) = 0$ or char$(\F) > d^r$, 
then  trdeg$(\mathbf{f}) \,=\, \rk_{\F(\mathbf{x})} \mathcal{J}_{\mathbf{x}}(\mathbf{f})$. This yields a randomized poly-time algorithm \cite{Sch80}. For other fields, Jacobian criterion fails due to inseparability and AD($\F$) is open.

\smallskip\noindent
\textbf{AM protocol.}
Arthur-Merlin class AM is a randomized version of the class NP (see \cite{AB09}).
Arthur-Merlin protocols, introduced by Babai \cite{babai1985trading}, can be considered as a special type of interactive proof system in which the randomized poly-time verifier (Arthur) and the all-powerful prover (Merlin) have only constantly many rounds of exchange. 
AM contains interesting problems like determining if two graphs are non-isomorphic.
AM $\cap$ coAM is the class of decision problems for which both YES and NO answers can be verified by an AM protocol. It can be thought of as the randomized version of NP $\cap$ coNP. See \cite{kayal2006complexity} for a few natural algebraic problems in AM $\cap$ coAM. If such a problem is NP-hard (even under random reductions) then polynomial hierarchy collapses to the second-level, i.e.~PH$=\Sigma_2$.

In this work AM protocol will only be used to distinguish whether a set $S$ is `small' or `large'. Formally, we refer to the Goldwasser-Sipser Set Lowerbound method:
\begin{lemma}\cite[Chap.9]{AB09} \label{lem_amprotocol}
Let $m\in\N$ be given in binary. Suppose $S$ is a set whose membership can be tested in nondeterministic polynomial time and its size is promised to be either $\le m$ or $\ge 2m$. Then, the problem of deciding whether $|S|\stackrel{?}{\ge}2m$ is in AM.
\end{lemma}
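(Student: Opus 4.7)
The plan is to use the classical Goldwasser--Sipser set lowerbound protocol, adapted to handle membership in $S$ through an NP-verifier rather than a deterministic test. I will use a family $\mathcal H$ of pairwise independent hash functions $h:\{0,1\}^n\to\{0,1\}^k$ (e.g.\ $h(x)=Ax+b$ over $\F_2$), where $n$ is a polynomial upper bound on the length of strings in $S$ and $k$ is set below. Because the promised gap between the cases $|S|\le m$ and $|S|\ge 2m$ is only a factor of $2$, which is too narrow to give the constant margin needed for AM, the first move is to amplify the gap by working with $T:=S^t$ for a suitably chosen $t=O(\log m)$. Membership in $T$ is still verifiable in NP by concatenating $t$ NP-certificates, and $|T|$ is either $\le m^t$ or $\ge (2m)^t = 2^t\cdot m^t$.

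The protocol is then as follows. Arthur samples $h\in \mathcal H$ and $y\in\{0,1\}^k$ uniformly and sends them to Merlin; Merlin replies with a candidate $\mathbf x=(x_1,\ldots,x_t)$ together with NP-certificates for $x_i\in S$; Arthur accepts iff every certificate checks out and $h(\mathbf x)=y$. Setting $2^k\approx 2 m^t$, a standard second-moment (Paley--Zygmund) argument using pairwise independence gives
\[
\Pr_{h,y}[\exists\,\mathbf x\in T:\, h(\mathbf x)=y]\;\ge\;\frac{|T|/2^k}{1+|T|/2^k},
\]
which in the YES case ($|T|/2^k\ge 2^{t-1}$) is arbitrarily close to $1$, while the union bound caps it at $|T|/2^k\le 1/2$ in the NO case. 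Hence completeness $\ge 2/3$ and soundness $\le 1/2$ hold for these parameters; standard parallel repetition then widens the gap to the textbook $2/3$ vs $1/3$ AM-thresholds.

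The main obstacle is choosing $t$ and $k$ so that the YES- and NO-probabilities are provably separated by a constant: the Paley--Zygmund lower bound only beats the trivial union bound once $|T|/2^k$ is substantially larger than $1$, which is exactly why the Cartesian amplification step is necessary, and where the required $t=O(\log m)$ comes from (recall $m$ is given in binary, so $\log m$ is polynomial in the input). The remaining ingredients---explicitly constructing the pairwise independent hash family with poly-time sampling/evaluation, composing the $t$ NP-certificates for membership in $S^t$, and parallel amplification---are all routine, so I would not grind through them.
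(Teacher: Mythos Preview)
The paper does not give its own proof of this lemma: it is quoted as a known fact from \cite[Chap.~9]{AB09} (the Goldwasser--Sipser set lowerbound method) and used as a black box. Your write-up is exactly the standard textbook argument and is correct. One small remark: a constant $t$ (e.g.\ $t=3$) already yields the required constant separation between the YES and NO acceptance probabilities, so $t=O(\log m)$ is more than needed, though of course still polynomial and hence harmless.
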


\smallskip\noindent
\textbf{Geometry.} Due to limited space we have moved the geometry preliminaries to Appendix \ref{app-AG}. One can also refer to a standard text, eg.~\cite{Har92, hartshorne2013algebraic}. Basically, we need terms about affine (resp.~projective) zerosets and the underlying Zariski topology. The latter gives a way to `impose' geometry even in very discrete situations, eg.~finite fields in this work. 

\vspace{-0.05in}
\section{Algebraic dependence testing: Proof of Theorem \ref{thm_amcoam} }\label{sec-dep}
\vspace{-0.03in}

Given $f_1,\dots,f_m\in \F_q[x_1,\dots,x_n]$, we want to decide if they are algebraically dependent. For this problem AD($\F_q$) we could assume, with some preprocessing, that $m=n$. For, $m>n$ means that its a YES instance. If $m<n$ then we could apply a `random' linear map on the variables to reduce them to $m$, preserving the YES/NO instances. Also, the trdeg does not change when we move to the algebraic closure $\ol{\F}_q$. The details can be found in \cite[Lem.2.7-2.9]{pandey2016algebraic}. So, we assume the input instance to be $\mathbf f := \inbrace{f_1,\dots,f_n}$ with nonconstant polynomials.
 
 In the following, let $D:=\prod_{i\in [n]}\deg(f_i) \,>0$ and $D':=\max_{i\in [n]} \deg(f_i) \,>0$. Let $d\in\N^+$ and $q'=q^d$. The value of $d$ will be determined later.
Let $f:\F_{q'}^n\to\F_{q'}^n$ be the polynomial map $a\mapsto (f_1(a),\dots,f_n(a))$. For $b=(b_1,\dots,b_n)\in \F_{q'}^n$, denote by $N_b$ the size of the preimage $f^{-1}(b)=$ $\{\mathbf x \in \F_{q'}^n \,\mid\, f(\mathbf x) = b \}$.

Define $\aff:=\ol{\F}_q$ and $N'_b:= \#\{\mathbf x \in \aff^n \,\mid\, f_i(\mathbf x) = b_i, \, \text{ for all } i\in[n] \}$ which might be $\infty$.
Let $Q\in\F_q[y_1,\dots,y_n]$ be a nonzero annihilator, of minimal degree, of $f_1,\dots,f_n$. If it exists then $\deg(Q)\leq D$ by Perron's bound. 

\subsection{AM protocol}

First, we study the independent case.

\begin{lemma}[Dim=0 preimage]\label{lem_finite}
Suppose $\mathbf f$ are independent. Then $N'_{f(a)}$ is {\em finite} for all but at most $(nDD'/q')$-fraction of $a\in\F_{q'}^n$.
\end{lemma}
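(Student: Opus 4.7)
The plan is a Perron-plus-Schwartz-Zippel argument, using annihilators that bind each coordinate variable $x_i$ to the tuple $\mathbf{f}$, thereby forcing the fiber to be zero-dimensional once a certain ``leading coefficient'' polynomial does not vanish.

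First I would exploit that since $\{f_1,\ldots,f_n\}$ is algebraically independent, the enlarged set $\{x_i, f_1,\ldots,f_n\}$ of $n+1$ polynomials in $n$ variables is algebraically dependent for every $i\in[n]$. Perron's bound then produces a nonzero annihilator $A_i(y_0,y_1,\ldots,y_n) \in \F_q[y_0,\ldots,y_n]$ of total degree at most $1\cdot\prod_j\deg(f_j)=D$, satisfying $A_i(x_i,f_1,\ldots,f_n)\equiv 0$. Writing $A_i=\sum_{j=0}^{k_i}c_{i,j}(y_1,\ldots,y_n)\,y_0^{j}$ with leading coefficient $c_{i,k_i}\neq 0$, I would immediately note $k_i\geq 1$, for otherwise $A_i\in\F_q[y_1,\ldots,y_n]$ would itself be a nonzero annihilator of $\mathbf{f}$, contradicting independence.

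Next, I would set $P_i(\mathbf{x}):=c_{i,k_i}(f_1,\ldots,f_n)\in\F_q[\mathbf{x}]$. Since $c_{i,k_i}$ is a nonzero polynomial in $n$ variables and $\mathbf{f}$ is algebraically independent, $P_i$ is a nonzero polynomial in $\mathbf{x}$; and $\deg(P_i)\leq \deg(c_{i,k_i})\cdot D'\leq DD'$. The core observation is that for any $a\in\F_{q'}^n$ with $P_i(a)\neq 0$ for all $i$, and for any $x\in\aff^n$ with $f(x)=f(a)$, the univariate polynomial $h_i(T):=A_i(T,f(a))\in\aff[T]$ has leading coefficient $c_{i,k_i}(f(a))=P_i(a)\neq 0$, hence degree exactly $k_i\geq 1$, and it vanishes at $T=x_i$. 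Thus each coordinate $x_i$ is forced into one of $\leq k_i$ values in $\aff$, giving $N'_{f(a)}\leq \prod_i k_i<\infty$.

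Finally, the exceptional set is bounded by Schwartz-Zippel: at most $\deg(P_i)\cdot(q')^{n-1}\leq DD'\cdot(q')^{n-1}$ points of $\F_{q'}^n$ are zeros of $P_i$, so a union bound over $i\in[n]$ yields a bad-set density of at most $nDD'/q'$, as claimed. There is no serious obstacle here; the only conceptual subtlety is ensuring $k_i\geq 1$ and $P_i\not\equiv 0$ simultaneously, both of which follow cleanly from the algebraic independence hypothesis, while Perron supplies the degree quantification and Schwartz-Zippel supplies the density estimate.
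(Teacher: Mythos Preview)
Your proposal is correct and follows essentially the same approach as the paper: for each $i$, take a Perron-bounded annihilator of $\{x_i,f_1,\ldots,f_n\}$, observe that its leading $y_0$-coefficient composed with $\mathbf f$ is a nonzero polynomial of degree at most $DD'$, and apply Schwartz--Zippel plus a union bound. The only cosmetic difference is that you explicitly argue $k_i\ge 1$, whereas the paper simply requires the specialized univariate polynomial to be nonzero (a weaker condition that it then certifies via the same leading-coefficient argument).
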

\begin{proof}
For $i\in [n]$, let $G_i\in\F_q[z, y_1,\dots,y_n]$ be the annihilator of $\inbrace{x_i, f_1,\dots,f_n}$. We have $\deg(G_i)\leq D$ by Perron's bound. 
Consider $a\in\F_{q'}^n$ such that  $G'_i(z):= G_i(z, f_1(a),\dots,f_n(a))\in\F_q[z]$ is a nonzero polynomial for every $i\in [n]$.
We claim that $N'_{f(a)}$ is finite for such $a$. 

To see this, note that for any $b=(b_1,\dots,b_n)\in \aff^n$ satisfying the equations $f_i(b)=f_i(a)$, $i\in[n]$, we have
\[
0=\, G_i(b_i, f_1(b), \dots, f_n(b)) \,=\, G_i(b_i, f_1(a),\dots, f_n(a)) \,=\, G'_i(b_i), \quad\forall i\in [n] \,.
\]
Hence, each $b_i$ is a root of $G'_i$. It follows that $N'_{f(a)} \leq \prod_{i\in[n]} \deg(G'_i) < \infty$, as claimed. 

It remains to prove  that the number of $a\in\F_{q'}^n$ satisfying $G'_i=0$, for some index $i\in [n]$, is bounded by $nDD'q'^{-1}\cdot q'^{n}$.
Fix $i\in [n]$. Suppose $G_i=\sum_{j=0}^{d_i} G_{i,j} z^j$, where $d_i:=\deg_z(G_i)$ and $G_{i,j}\in \F_q[y_1,\dots,y_n]$, for $0\leq j\leq d_i$.
The leading coefficient $G_{i,d_i}$ is nonzero. As $f_1,\dots,f_n$ are algebraically independent, the polynomial $G_{i,d_i}(f_1,\dots,f_n)\in\F_q[x_1,\dots,x_n]$ is also nonzero.
Its degree is $\le D'\deg(G_{i,d_i}) \leq D'\deg(G_i) \leq DD'$.
By \cite{Sch80}, for all but at most $(DD'/q')$-fraction of $a\in \F_{q'}^n$, we have $G_{i,d_i}(f_1(a),\dots,f_n(a))\neq 0$ which   implies  
\[
G'_i(z) \,=\, G_i(z, f_1(a),\dots,f_n(a)) \,=\, \sum_{j=0}^{d_i} G_{i,j}(f_1(a),\dots,f_n(a)) z^j \,\neq 0 \,.
\]
The claim now follows from the union bound.
\end{proof}

We need the following affine version of B\'ezout's Theorem. Its proof can be found in \cite[Thm.3.1]{Sch95}.
\begin{theorem}[B\'ezout's]
 Let $g_1,\dots,g_n\in \aff[x_1,\dots,x_n]$. Then the number of common zeros of $g_1,\dots,g_n$ in $\aff^n$ is either infinite, or at most $\prod_{i\in[n]} \deg(g_i)$.
\end{theorem}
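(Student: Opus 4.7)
The plan is to reduce the claim to the classical projective B\'ezout theorem via homogenization. Write $d_i := \deg g_i$; if $V := Z(g_1,\dots,g_n) \subseteq \aff^n$ is infinite, there is nothing to prove, so assume $V$ is finite. Homogenize each $g_i$ to a degree-$d_i$ form $G_i \in \aff[x_0,\dots,x_n]$, and set $\tilde V := Z(G_1,\dots,G_n) \subseteq \mathbb P^n$. Under the embedding $a \mapsto [1:a]$, the set $V$ is identified with $\tilde V \setminus H_\infty$, where $H_\infty := \{x_0=0\}$. If $\tilde V$ is itself zero-dimensional, classical projective B\'ezout immediately gives $|\tilde V| \le \prod d_i$ and hence $|V| \le \prod d_i$, finishing the argument.

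The only genuine obstacle is the case in which $\tilde V$ has positive-dimensional components; each such component must lie entirely in $H_\infty$, since the affine part $V = \tilde V \setminus H_\infty$ is finite. To neutralise this excess at infinity I would form the universal family $\mathcal V := \{(c,a) \in \aff^n \times \aff^n : g_i(a) = c_i \text{ for all } i\}$ together with its projection $\pi : \mathcal V \to \aff^n$ onto the $c$-coordinate, whose fibre over $c = 0$ is exactly $V$. Since the leading forms of $g_i - c_i$ coincide with the leading forms of $g_i$ and are independent of $c$, the locus at infinity of the projective closure $\tilde V_c := \overline{\pi^{-1}(c)} \subseteq \mathbb P^n$ is the same constant subscheme for every $c$. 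A dimension count on $\mathcal V$ (which has pure dimension $n$) plus a generic-position argument then show that for generic $c$ the fibre $V_c$ is finite and $\tilde V_c$ is zero-dimensional, so projective B\'ezout applied to $\tilde V_c$ yields $|V_c| \le |\tilde V_c| \le \prod d_i$.

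The main remaining difficulty is transferring this bound from a generic fibre of $\pi$ back to the special fibre $V = V_0$, i.e.~establishing $|V_0| \le |V_c|$ for generic $c$. The cleanest route is via upper semi-continuity of the length of a zero-dimensional fibre in a flat family of projective schemes (after compactifying $\pi$), which forces $\mathrm{length}(V_0) \le \mathrm{length}(V_c) \le \prod d_i$ and hence $|V| \le \prod d_i$. An even more elementary alternative, and effectively the route taken in the cited reference \cite{Sch95}, is to bypass the deformation altogether and invoke Heintz's B\'ezout inequality $\deg(X \cap Y) \le \deg(X)\deg(Y)$ iteratively on the hypersurfaces $Z(g_1),\dots,Z(g_n) \subseteq \aff^n$ of degrees $d_1,\dots,d_n$; for a zero-dimensional affine variety the degree equals the number of points, giving the bound directly. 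Either way, this semi-continuity / Heintz step is where all the real work of the argument is concentrated.
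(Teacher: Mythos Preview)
The paper does not prove this statement at all; it simply cites \cite[Thm.~3.1]{Sch95}. Your second route---iterating Heintz's affine B\'ezout inequality $\deg(X\cap Y)\le \deg(X)\deg(Y)$ across the hypersurfaces $Z(g_1),\dots,Z(g_n)$---is correct and is exactly what the cited reference does, so on that count you match the paper's (outsourced) argument.

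Your deformation argument, however, has a real gap: the direction of semi-continuity is backwards. For a proper family, the function $c\mapsto \mathrm{length}(\text{fibre over }c)$ is \emph{upper} semi-continuous, meaning it can only jump \emph{up} at special points; this yields $\mathrm{length}(\tilde V_0)\ge \mathrm{length}(\tilde V_c)$ for generic $c$, not $\le$. In a genuinely flat proper family the length would be constant, but you have not established flatness of the compactified family, and in the very case you are trying to handle---where the fixed locus at infinity is positive-dimensional---the compactified fibres are not zero-dimensional at all, so ``length'' is not even the right invariant. The honest way to salvage the deformation picture is Fulton's refined B\'ezout (excess intersection theory), which bounds the contribution of the isolated affine points by $\prod d_i$ even in the presence of positive-dimensional excess at infinity; but that is a much deeper input than what you invoked, and at that point the Heintz route is both more elementary and more direct.
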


Combining Lemma~\ref{lem_finite} with B\'ezout's Theorem, we obtain
\begin{lemma}[Small preimage]\label{lem_bpre}
Suppose $\mathbf f$ are independent.
Then $N_{f(a)}\leq D$ for all but at most $(nDD'/q')$-fraction of $a\in\F_{q'}^n$.
\end{lemma}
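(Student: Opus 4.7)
The plan is to combine Lemma~\ref{lem_finite} with the affine B\'ezout bound stated just above in the excerpt, using the trivial inclusion $\F_{q'}^n\subseteq \aff^n$ to pass between the two counting functions.

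First I would fix $a\in\F_{q'}^n$ and set $b_i:=f_i(a)\in\F_q[\mathbf x]$-evaluations, so that $N_{f(a)}$ counts common zeros in $\F_{q'}^n$ of the system $g_i(\mathbf x):=f_i(\mathbf x)-b_i$, $i\in[n]$, while $N'_{f(a)}$ counts common zeros of the same system in $\aff^n$. Since $\F_{q'}^n\subseteq \aff^n$, we always have $N_{f(a)}\le N'_{f(a)}$, regardless of whether the latter is finite or infinite.

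Next, I would invoke Lemma~\ref{lem_finite}: since $\mathbf f$ is independent, the set of ``bad'' $a\in \F_{q'}^n$ for which $N'_{f(a)}=\infty$ has density at most $nDD'/q'$. For every $a$ outside this bad set, the common zero locus of $g_1,\dots,g_n$ in $\aff^n$ is finite, so B\'ezout's theorem applies and yields
\[
N'_{f(a)}\,\le\, \prod_{i\in[n]}\deg(g_i)\,=\,\prod_{i\in[n]}\deg(f_i)\,=\,D,
\]
where the equality $\deg(g_i)=\deg(f_i)$ uses that $b_i$ is a constant in $\aff$. Combining this with the inclusion $N_{f(a)}\le N'_{f(a)}$ from the previous step gives $N_{f(a)}\le D$ on the complement of the bad set, which is exactly the claimed statement.

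There is essentially no serious obstacle here; the lemma is a short corollary of Lemma~\ref{lem_finite} and B\'ezout. The only point to watch is that B\'ezout is applied in affine space $\aff^n$ (hence the need for the intermediate count $N'_{f(a)}$) rather than directly over $\F_{q'}$, and that $\deg(f_i-b_i)$ stays equal to $\deg(f_i)$ since each $f_i$ is nonconstant, so no degree is lost in forming the $g_i$'s.
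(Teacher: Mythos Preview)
Your proposal is correct and follows exactly the paper's approach: the paper simply states that the lemma follows by combining Lemma~\ref{lem_finite} with B\'ezout's Theorem, and your write-up spells out precisely those two steps (finiteness of $N'_{f(a)}$ on the good set, then the B\'ezout bound $\prod_i\deg(f_i-b_i)=D$).
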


Next, we study the dependent case (with an annihilator $Q$).

\begin{lemma}[Large preimage]\label{lem_coamdep}
Suppose $\mathbf f$ are dependent.
Then for $k>0$, we have $N_{f(a)}>k$ for all but at most $(kD/q')$-fraction of $a\in \F_{q'}^n$.
\end{lemma}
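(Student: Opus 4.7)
The plan is to exploit the fact that algebraic dependence forces $\im(f)$ to lie inside the zeroset of a low-degree polynomial, and then convert this small-image bound into a preimage-size statement by a pigeonhole over the image.

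First, since $\mathbf f$ are algebraically dependent, Perron's bound supplies a nonzero annihilator $Q\in\F_q[y_1,\dots,y_n]$ with $\deg Q\le D$ satisfying $Q(f_1,\dots,f_n)=0$ in $\F_q[\mathbf x]$. Evaluating at any $a\in\F_{q'}^n$, we get $Q(f(a))=0$, so
\[
\im(f)\cap \F_{q'}^n \;\subseteq\; Z(Q) \;:=\; \{b\in\F_{q'}^n : Q(b)=0\}.
\]
By the Schwartz-Zippel lemma \cite{Sch80}, $|Z(Q)|\le \deg(Q)\cdot q'^{\,n-1}\le D\,q'^{\,n-1}$, hence $|\im(f)\cap \F_{q'}^n|\le D\,q'^{\,n-1}$.

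Second, I would bound the ``bad'' set $B:=\{a\in\F_{q'}^n: N_{f(a)}\le k\}$ by partitioning it over the image: since $a\in B$ iff $f(a)\in\im(f)$ and $N_{f(a)}\le k$, we have the disjoint decomposition
\[
B \;=\; \bigsqcup_{\substack{b\in \im(f)\\ N_b\le k}} f^{-1}(b),
\]
and therefore $|B|\le k\cdot |\im(f)\cap \F_{q'}^n|\le k\,D\,q'^{\,n-1}$. Dividing by $q'^{\,n}$ yields the claimed fraction $kD/q'$.

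The only real conceptual point — and the possible ``obstacle'' — is the direction of the argument: one does not try to produce $k$ distinct preimages of $f(a)$ for a generic $a$, but rather uses the fact that when the image is squeezed into a codimension-one subvariety, then on average each image point must be hit many times, and the few image points with fewer than $k$ preimages can account for only $k\,|\im(f)|$ choices of $a$. Once this image-to-preimage pigeonhole is recognized, the lemma reduces to Perron's degree bound plus Schwartz-Zippel.
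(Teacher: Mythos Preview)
Your proof is correct and follows essentially the same route as the paper: bound $|\im(f)|\le Dq'^{\,n-1}$ via the annihilator $Q$ (of degree $\le D$) together with Schwartz--Zippel, then observe that the set of $a$ with $N_{f(a)}\le k$ has size at most $k\cdot|\im(f)|$ by fibering over the image. The only cosmetic difference is that the paper calls $B$ the set of bad \emph{image} points rather than bad domain points, but the counting is identical.
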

\begin{proof}
 Let $\im(f):= f(\F_{q'}^n)$ be the image of the map. Note that $Q$ vanishes on all the points in $\im(f)$. So, $|\im(f)| \leq Dq'^{n-1}$ by \cite{Sch80}.
 
Let $B:= \{b\in \im(f): N_b\leq k\}$ be the ``bad'' images. We can estimate the bad domain points as, 
\[
\#\{a\in \F_{q'}^n: N_{f(a)}\leq k\} \,=\, \#\{a\in \F_{q'}^n: f(a)\in B\} \,\le\,  k|B| \le k|\im(f)| \,\le\, k D q'^{n-1} \,.
\]
which proves the lemma.
\end{proof}

\begin{theorem}[AM]\label{thm_am}
Testing algebraic dependence of $\mathbf f$ is in AM.
\end{theorem}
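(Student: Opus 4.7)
The plan is to combine the preimage-size gap established by Lemma~\ref{lem_bpre} and Lemma~\ref{lem_coamdep} with the Goldwasser--Sipser Set Lowerbound protocol (Lemma~\ref{lem_amprotocol}). The first step is to fix the extension degree $d$ so that $q' = q^d$ is comfortably larger than both $nDD'$ and $D^2$; for concreteness, pick the smallest $d$ with $q' \ge 100\,\max(nDD',\, D^2)$. Since $D \le \prod_i \deg(f_i) \le 2^{O(s)}$ where $s$ is the input circuit size, $\log q'$ remains polynomial in the input size, so elements of $\F_{q'}$ admit polynomial-size representation and arithmetic in $\F_{q'}$ is polynomial-time.

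The AM protocol itself has two stages. Arthur first picks $a \in \F_{q'}^n$ uniformly at random and computes $b := f(a)$ by evaluating the input circuits. Let
\[
S \,:=\, f^{-1}(b) \,=\, \{\mathbf{x}\in \F_{q'}^n \,:\, f(\mathbf{x}) = b\}\,.
\]
Membership in $S$ is decidable in deterministic polynomial time (evaluate the $f_i$'s at a candidate $\mathbf{x}$), which trivially places it in nondeterministic polynomial time, so $S$ satisfies the hypothesis of Lemma~\ref{lem_amprotocol}. Applying Lemma~\ref{lem_bpre} in the independent case and Lemma~\ref{lem_coamdep} with $k := 2D$ in the dependent case, the choice of $q'$ makes both failure fractions $nDD'/q'$ and $2D^2/q'$ at most $1/100$. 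Hence, with probability at least $99/100$ over the random $a$,
\[
\mathbf{f}\text{ independent} \;\Longrightarrow\; |S|\le D; \qquad \mathbf{f}\text{ dependent} \;\Longrightarrow\; |S| > 2D \,.
\]
Arthur and Merlin now run the Set-Lowerbound protocol (Lemma~\ref{lem_amprotocol}) with threshold $m:=D$: Merlin can convince Arthur that $|S|\ge 2D$ precisely in the dependent (YES) case, and otherwise Arthur rejects with high probability. Constant error can then be driven below any desired bound by standard parallel repetition, yielding the composite AM protocol for AD($\F_q$) (noting that AD($\F_q$) instances are also AD($\F_{q'}$) instances).

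The main technical care is purely parameter bookkeeping: $q'$ must be large enough to make both failure fractions $o(1)$ while keeping $\log q'$ polynomial in the input size. This hinges on the (easy) observation that $\log D$ and $\log D'$ are polynomial in circuit size. A secondary subtlety is that Lemma~\ref{lem_amprotocol} requires a factor-$2$ gap between the ``small'' and ``large'' regimes, which matches the $D$ versus $2D$ split produced by the two preimage lemmas; a tighter gap would merely require enlarging $q'$ further. No new algebraic input beyond Lemmas~\ref{lem_bpre} and~\ref{lem_coamdep} is needed -- the whole proof is an AM-wrapper around the preimage-size gap already established, so I do not foresee an obstacle beyond writing the error analysis carefully.
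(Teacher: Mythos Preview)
Your proposal is correct and follows essentially the same approach as the paper: pick $q'$ large enough that the failure probabilities in Lemmas~\ref{lem_bpre} and~\ref{lem_coamdep} (with $k=2D$) are small, have Arthur sample a random $a\in\F_{q'}^n$, and then invoke the Goldwasser--Sipser protocol on the set $f^{-1}(f(a))$ with threshold $D$. The paper's proof is just a terser version of yours, with the same parameter choices up to constants.
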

\begin{proof}
Fix $q'=q^d>4nDD'+4kD$ and $k:=2D$. Note that $d$ will be polynomial in the input size.
For an $a\in \F_{q'}^n$, consider the set 
$f^{-1}(f(a)) :=$ $\{\mathbf x \in \F_{q'}^n \,\mid\, f(\mathbf x) = f(a) \}$.

By Lemmas \ref{lem_bpre} \& \ref{lem_coamdep}: When Arthur picks $a$ randomly, with high probability, $|f^{-1}(f(a))|= N_{f(a)}$ is more than $2D$ in the dependent case while $\le D$ in the independent case. Note that an upper bound on $\prod_{i\in [n]}\deg(f_i)$ can be deduced from the size of the input circuits for $f_i$'s; thus, we know $D$. Moreover, containment in $f^{-1}(f(a))$ can be tested in P. Thus, by Lemma \ref{lem_amprotocol}, AD($\F_q$) is in AM.
\end{proof}

\subsection{coAM protocol}

We again study the independent case wrt a different point in the range of $f$.

\begin{lemma}[Large image]\label{lem-coam-1}
Suppose $\mathbf f$ are independent.
Then $N_b>0$ for at least $(D^{-1} - nD'q'^{-1})$-fraction of $b\in \F_{q'}^n$.
\end{lemma}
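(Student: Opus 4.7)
\medskip

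The plan is to derive the image lower bound directly from the preimage upper bound established in Lemma~\ref{lem_bpre}, via a simple double-counting argument. The key observation is that the number of nonzero preimages $\#\{b\in\F_{q'}^n : N_b>0\}$ is exactly $|\im(f)|$, so the claim reduces to showing $|\im(f)| \ge (D^{-1} - nD'q'^{-1})\cdot q'^n$.

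First, I will let $G := \{a\in\F_{q'}^n : N_{f(a)} \le D\}$ be the set of ``good'' domain points. Lemma~\ref{lem_bpre} (applicable because $\mathbf f$ are independent) gives $|G| \ge (1 - nDD'/q')\cdot q'^n$. Next, set $B := f(G) \subseteq \im(f)$. Every $b\in B$ is the image of some $a\in G$, so by definition $N_b = N_{f(a)} \le D$. Since $G \subseteq f^{-1}(B)$, the standard counting identity yields
\[
|G| \;\le\; \sum_{b\in B} N_b \;\le\; D\cdot |B|.
\]

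Dividing through gives $|B| \ge |G|/D \ge (D^{-1} - nD'q'^{-1})\cdot q'^n$. Since $\{b\in\F_{q'}^n : N_b > 0\} = \im(f) \supseteq B$, this yields the claimed fraction, completing the proof.

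I do not anticipate any real obstacle: the lemma is essentially a contrapositive-style consequence of Lemma~\ref{lem_bpre}, and the only subtlety is keeping the bookkeeping of ``most $a$'s have small preimage'' $\Rightarrow$ ``image is large''. One small point worth noting explicitly is that the bound becomes vacuous when $nD'/q' \ge 1/D$; since we will ultimately choose $q' = q^d$ large enough (as in the proof of Theorem~\ref{thm_am}), this is not a genuine concern, but it confirms why a sufficiently large extension degree $d$ is taken.
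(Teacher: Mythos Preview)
Your proof is correct and follows essentially the same approach as the paper: define the set of good domain points (the paper calls it $S$, you call it $G$), use Lemma~\ref{lem_bpre} to lower-bound its size, and then observe that each point of $f(G)$ has at most $D$ preimages to get $|f(G)|\ge |G|/D$. The double-counting inequality $|G|\le\sum_{b\in B}N_b\le D|B|$ is exactly the content of the paper's one-line estimate.
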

\begin{proof}
Let $S:= \{a\in \F_{q'}^n: N_{f(a)}\leq D\}$.
Then $|S|\geq (1-n DD'q'^{-1})\cdot q'^{n}$ by Lemma \ref{lem_bpre}.
As every $b\in f(S)$ has at most $D$ preimages in $S$ under $f$, we have 
$|f(S)| \,\ge\, |S|/D \,\ge\, (D^{-1}-nD'q'^{-1})\cdot q'^{n} $.
This proves the lemma since $N_b>0$ for all $b\in f(S)$.
\end{proof}

Next, we study the dependent case.

\begin{lemma}[Small image]\label{lem-coam-2}
Suppose $\mathbf f$ are dependent.
Then $N_b=0$ for all but at most $(D/q')$-fraction of $b\in \F_{q'}^n$.
\end{lemma}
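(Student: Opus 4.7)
The plan is to mirror the strategy of Lemma \ref{lem_coamdep}, but to count points in the codomain instead of the domain. The key observation is that the annihilator $Q$ provides a single low-degree polynomial equation that every image point of $f$ must satisfy, so the image $\im(f)$ is confined to the zeroset of $Q$.

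First I would invoke Perron's bound on the annihilator: since $\mathbf f$ are dependent, there exists a nonzero $Q \in \F_q[y_1,\dots,y_n]$ with $Q(f_1,\dots,f_n)=0$ and $\deg(Q)\le D$. Next, for any $a\in \F_{q'}^n$, evaluating the identity $Q(f_1,\dots,f_n)=0$ at $a$ gives $Q(f(a))=0$, hence $\im(f)\subseteq \zer(Q)\cap \F_{q'}^n$. Consequently, if $N_b>0$ for some $b\in \F_{q'}^n$ (i.e.\ $b$ lies in the image of $f$), then $Q(b)=0$. Contrapositively, $\{b\in\F_{q'}^n : Q(b)\ne 0\}\subseteq \{b\in\F_{q'}^n : N_b=0\}$.

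Finally I would apply the Schwartz--Zippel lemma \cite{Sch80} to the nonzero polynomial $Q$ of degree $\le D$: the number of $b\in \F_{q'}^n$ with $Q(b)=0$ is at most $(D/q')\cdot q'^n$. Therefore $N_b=0$ for all but at most a $(D/q')$-fraction of $b\in \F_{q'}^n$, as claimed.

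There is no real obstacle here; the lemma is essentially a one-line corollary of Perron's degree bound combined with Schwartz--Zippel, and it is the natural ``dependent-case'' counterpart of Lemma \ref{lem-coam-1}. Together with Lemma \ref{lem-coam-1}, choosing $q'$ large enough so that the fractions $D/q'$ and $nD'/q'$ are separated by a factor of two (as in the proof of Theorem \ref{thm_am}) will yield the promised gap in $|\{b : N_b > 0\}|$, and membership in this set is testable in NP by guessing a preimage $a$. Invoking Lemma \ref{lem_amprotocol} then places the complement problem in AM, giving the coAM protocol.
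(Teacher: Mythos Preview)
Your proof is correct and is essentially identical to the paper's: the paper simply observes that $N_b>0$ iff $b\in\im(f)$ and then cites the bound $|\im(f)|\le Dq'^{n-1}$ already established (via the annihilator $Q$ and Schwartz--Zippel) inside the proof of Lemma~\ref{lem_coamdep}. You have merely unrolled that reference and written out the same argument directly.
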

\begin{proof}
By definition: $N_b>0$ iff $b\in \im(f):=f(\F_{q'}^n)$. It was shown in the proof of Lemma~\ref{lem_coamdep} that $|\im(f)| \le Dq'^{n-1}$. The lemma follows.
\end{proof}

\begin{theorem}[coAM]\label{thm_coam}
Testing algebraic dependence of $\mathbf f$ is in coAM.
\end{theorem}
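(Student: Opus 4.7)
The plan is to exhibit an AM protocol for the complement problem (algebraic independence), by applying the Goldwasser-Sipser set lower bound (Lemma~\ref{lem_amprotocol}) to the image set $S:=\im(f)=f(\F_{q'}^n)\subseteq\F_{q'}^n$, exploiting the gap between Lemmas~\ref{lem-coam-1} and~\ref{lem-coam-2}.

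First, I would choose $d$ so that $q'=q^d$ satisfies, say, $q'\ge 2D^2+nDD'+1$; by doubling $d$ if necessary this can be arranged with $d$ polynomial in the input size. Setting the threshold $m:=Dq'^{n-1}$, Lemma~\ref{lem-coam-2} gives $|\im(f)|\le m$ whenever $\mathbf f$ is algebraically dependent, while Lemma~\ref{lem-coam-1} yields
\[
|\im(f)|\;\ge\;\bigl(D^{-1}-nD'q'^{-1}\bigr)\,q'^{\,n}\;=\;q'^{\,n}/D-nD'q'^{\,n-1}\;\ge\;2m
\]
whenever $\mathbf f$ is algebraically independent (the choice of $q'$ makes the ratio $|\im(f)|/m$ at least $2$ in the independent case). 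Thus the two cases are separated by a factor-$2$ gap in the size of $S=\im(f)$.

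Second, I need to verify that membership in $\im(f)$ is testable in nondeterministic polynomial time, as required by Lemma~\ref{lem_amprotocol}. This is immediate: to certify $b\in\im(f)$, Merlin supplies a point $a\in\F_{q'}^n$, and Arthur checks $f(a)=b$ in deterministic polynomial time using the circuits for $f_1,\dots,f_n$ (note that elements of $\F_{q'}$ have polynomial bit-size since $d$ is polynomial). Since the threshold $m=Dq'^{n-1}$ can be computed from the input (as $D$ is bounded by the product of the circuit-degree upper bounds), Lemma~\ref{lem_amprotocol} supplies an AM protocol that accepts precisely when $|\im(f)|\ge 2m$, i.e.\ precisely on the algebraically independent instances. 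This places independence testing in AM, so AD$(\F_q)$ lies in coAM, completing the proof when combined with Theorem~\ref{thm_am}.

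I do not anticipate any real obstacle here: the geometric and counting content is entirely packaged in Lemmas~\ref{lem-coam-1} and~\ref{lem-coam-2}, and the only minor care is in choosing $q'$ large enough (while keeping $d=\log_q q'$ polynomial) to open the multiplicative factor-$2$ gap demanded by the Goldwasser-Sipser protocol, and in noting that $\im(f)$ has an obvious NP certificate whereas zero-sets of the $f_i$'s would not.
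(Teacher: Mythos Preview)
Your proposal is correct and follows essentially the same approach as the paper: both fix $q'>D(2D+nD')$ (your $q'\ge 2D^2+nDD'+1$ is the same condition), set the threshold at $Dq'^{n-1}$, use Lemmas~\ref{lem-coam-1} and~\ref{lem-coam-2} to establish the factor-$2$ gap in $|\im(f)|$, observe that membership in $\im(f)$ has an NP certificate, and invoke Lemma~\ref{lem_amprotocol}.
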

\begin{proof}
Fix $q'=q^d \,> D(2D+nD')$. Note that $d$ will be polynomial in the input size.
For $b\in \F_{q'}^n$, consider the set 
$f^{-1}(b) :=$ $\{\mathbf x \in \F_{q'}^n \,\mid\, f(\mathbf x) = b \}$ of size $N_b$.

Define $S:= \im(f)$. Note that: $b\in \F_{q'}^n$ has $N_b>0$ iff $b\in S$. Thus, by Lemma \ref{lem-coam-1} (resp.~Lemma \ref{lem-coam-2}), $|S|\ge (D^{-1} - nD'q'^{-1})q'^n\,> 2Dq'^{n-1}$ (resp.~$|S|\le Dq'^{n-1}$) when $\mathbf f$ are independent (resp.~dependent). 
Note that an upper bound on $\prod_{i\in [n]}\deg(f_i)$ can be deduced from the size of the input circuits for $f_i$'s; thus, we know $Dq'^{n-1}$. Moreover, containment in $S$ can be tested in NP. Thus, by Lemma \ref{lem_amprotocol}, AD($\F_q$) is in coAM.
\end{proof}

\begin{proof}[Proof of Theorem \ref{thm_amcoam}]
The statement immediately follows from Theorems \ref{thm_am} \& \ref{thm_coam}.
\end{proof}

\vspace{-0.05in}
\section{Approximate polynomials satisfiability: Proof of Theorem \ref{thm-aps}} \label{sec:aps}
\vspace{-0.05in}

Theorem \ref{thm-aps} is proved in two parts. First, we show that APS is equivalent to AnnAtZero problem; which means that it is NP-hard \cite{Kay09}. Next, we utilize the  beautiful underlying geometry to devise a PSPACE algorithm.

\subsection{APS is equivalent to AnnAtZero}
\vspace{-0.05in}

Let $\aff$ be the algebraic closure of $\F$. Note that for the given polynomials $\mathbf f:= \inbrace{f_1,\dots,f_m}$ in $\F[\mathbf x]$, there is an annihilator over $\F$ with nonzero constant term iff there is an annihilator over $\aff$ with nonzero constant term. This is because if $Q$ is an annihilator over $\aff$ with nonzero constant term, wlog $1$, then (by basic linear algebra) the linear system in terms of the (unknown) coefficients of $Q$ would also have a solution in $\F$. Thus, there is an annihilator over $\F$ with constant term $1$. This proves that it suffices to solve AnnAtZero over the algebraically closed field $\aff$. This provides us with a better geometry. 

 Write $f:\aff^n\to \aff^m$ for the polynomial map sending a point $x=(x_1,\dots,x_n)\in \aff^n$ to $(f_1(x),\dots,f_m(x))\in \aff^m$.
 For a subset $S$ of an affine or projective space, write $\overline{S}$ for its Zariski closure in that space. We will use $O$ to denote the origin $\mathbf 0$ of an affine space.

% Kayal gave the following counterexample, $f_1=x_1,f_2=x_1x_2-1, f_3=x_1^2x_2$. The annihilator of these three polynomials is constant free. But 0 is not a common solution of these three polynomials. The key observation is that even though $0 \notin \im(f)$,  $O\in \overline{\im(f)}$. If we set $x_1=\epsilon, x_2=\frac{1}{\epsilon}$, we get $f_1(\epsilon,\frac{1}{\epsilon})=\epsilon, f_2(\epsilon,\frac{1}{\epsilon})=0$ and $f_3(\epsilon,\frac{1}{\epsilon})=\epsilon$. 
 
The following lemma reinterprets APS in a geometric way.
 \begin{lemma}[$O$ in the closure]\label{lem_geom}
 The constant term of every annihilator for $\mathbf f$ is zero ~iff ~$O\in \overline{\im(f)}$.
 \end{lemma}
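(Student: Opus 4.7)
The plan is to translate everything into the ideal--variety dictionary and read off both directions directly. Observe first that the constant term of an annihilator $Q\in\aff[y_1,\dots,y_m]$ equals $Q(O)$, so the left-hand condition is the assertion that every $Q$ in the annihilator ideal
\[
I \,:=\, \setdef{Q\in\aff[y_1,\dots,y_m]}{Q(f_1,\dots,f_m)=0}
\]
vanishes at the origin $O\in\aff^m$.

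Next I would identify $I$ with the vanishing ideal of $\overline{\im(f)}$. Indeed, $Q\in I$ iff $Q(f_1(x),\dots,f_m(x))=0$ for every $x\in\aff^n$, which is exactly the condition that $Q$ vanishes pointwise on $\im(f)$. Since any polynomial that vanishes on a set vanishes on its Zariski closure (this is essentially the definition, as the Zariski closure is cut out by the common zeros of such polynomials), this is equivalent to $Q$ vanishing on $\overline{\im(f)}$. Hence $I=I(\overline{\im(f)})$, the vanishing ideal of the affine variety $\overline{\im(f)}\subseteq\aff^m$.

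With this identification both directions are immediate. For the forward direction, if every $Q\in I$ vanishes at $O$, then $O$ is a common zero of $I=I(\overline{\im(f)})$; but an affine variety over an algebraically closed field coincides with the zero set of its vanishing ideal, so $O\in V(I(\overline{\im(f)}))=\overline{\im(f)}$. Conversely, if $O\in\overline{\im(f)}$, then by definition every polynomial in the vanishing ideal of $\overline{\im(f)}$ vanishes at $O$, which is precisely the statement that every annihilator has zero constant term.

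There is no real obstacle here: the only thing to be careful about is working over the algebraically closed field $\aff$ (so that the correspondence $V\!\circ\! I=\mathrm{id}$ holds on closed subvarieties), and the preliminary remark already made in the section justifies passing from $\F$ to $\aff$ without loss of generality. No Nullstellensatz beyond this basic closure statement is required, since we only need that $\overline{\im(f)}$ is itself the zero set of its own vanishing ideal.
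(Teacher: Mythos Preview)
Your proof is correct and follows essentially the same route as the paper's: both identify the annihilator ideal $I$ with the vanishing ideal of $\im(f)$ (equivalently, $\overline{\im(f)}=V(I)$), and then observe that $O\in V(I)$ is exactly the condition that every $Q\in I$ has $Q(O)=0$, i.e., zero constant term. The paper phrases the last step as the ideal containment $I\subseteq\m=\langle Y_1,\dots,Y_m\rangle$, but this is just a restatement of your pointwise condition.
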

 \begin{proof}
 Note that: $Q\in \aff[Y_1,\dots,Y_m]$ vanishes on $\im(f)$ iff $Q(\mathbf f)$ vanishes on $\aff^n$, which holds iff $Q(\mathbf f)=0$, i.e., $Q$ is an annihilator for $\mathbf f$. So $\overline{\im(f)}=V(I)$, where the ideal $I\subseteq \aff[Y_1,\dots,Y_m]$ consists of the annihilators for $\mathbf f$. Also note that $\{O\}=V(\m)$, where $\m$ is the maximal ideal $\langle Y_1,\dots, Y_m\rangle$.

Let us study the condition $O\in \overline{\im(f)}$.
 By the ideal-variety correspondence, $\{O\} \,=\, V(\m) \,\subseteq\, \overline{\im(f)} \,=\, V(I)$ is equivalent to $I\subseteq \m$, i.e.,  $Q\bmod \m=0$ for  $Q\in I$. But $Q\bmod \m$ is just the constant term of the annihilator $Q$. Hence, we have the equivalence.
 \end{proof}
 
As an interesting corner case, the above lemma proves that whenever $\mathbf f$ are algebraically {\em independent}, we have $\aff^m=\ol{\im(f)}$. Eg.~$f_1=X_1$ and $f_2=X_1 X_2-1$. Even in the dependent cases, $\im(f)$ is not necessarily closed in the Zariski topology. 
\begin{exmp}
Let $n=2$, $m=3$. Consider $f_1=f_2=X_1$ and $f_3=X_1 X_2-1$. The annihilators are multiples of $(Y_1-Y_2)$, which means by Lemma \ref{lem_geom} that $O\in \ol{\im(f)}$. But there is no solution to $f_1=f_2=f_3=0$, i.e.~$O\notin \im(f)$.
\end{exmp}

\noindent
\textbf{Approximation.} Although $O\in \overline{\im(f)}$ is not equivalent to the existence of a solution $x\in \aff^n$ to $f_i=0$, $i\in[m]$, it is equivalent to the existence of an  ``approximate solution'' $x\in \aff[\epsilon,\epsilon^{-1}]^n$, which is a tuple of Laurent polynomials in a formal variable $\epsilon$. The formal statement is as follows. Wlog we assume $\mathbf f$ to be $m$ nonconstant polynomials.

\begin{theorem}[Approx.~wrt $\epsilon$]\label{thm_approx}
$O\in \overline{\im(f)}$ iff there exists $x=(x_1,\dots,x_n)\in \aff(\epsilon)^n$ such that $f_i(x)\in \epsilon \aff[\epsilon]$, for all $i\in [m]$. Moreover, when such $x$ exists, it may be chosen such that
\[
x_i \,\in\,\, \epsilon^{-D} \aff[\epsilon] \,\cap\, \epsilon^{D'} \aff[\epsilon^{-1}] \,=\, \left\{\sum_{j=-D}^{D'} c_j \epsilon^j: c_j\in \aff\right\}, \quad i\in[n],
\]
where $D:=\prod_{i\in[m]} \deg(f_i) \,>0$ and $D':=(\max_{i\in[m]} \deg(f_i))\cdot D \,>0$.
\end{theorem}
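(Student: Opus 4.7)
The plan is to prove both directions, with the forward direction requiring the main work. For the reverse direction $(\Leftarrow)$: given $x \in \aff(\epsilon)^n$ with $f_i(x) \in \epsilon\aff[\epsilon]$ for every $i$, I would substitute into any annihilator $Q \in \aff[Y_1, \ldots, Y_m]$ of $\mathbf f$ to obtain $Q(f(x)) = 0$ in $\aff(\epsilon)$; since $f(x) \in \aff[\epsilon]^m$, reducing modulo $\epsilon$ yields $Q(O) = 0$, so every annihilator has zero constant term, and Lemma~\ref{lem_geom} concludes $O \in \overline{\im(f)}$.

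For the forward direction $(\Rightarrow)$ I would proceed in three stages. First, \emph{curve selection}: since the $f_i$ are nonconstant, $\overline{\im(f)}$ has dimension at least one, and by Chevalley's theorem $\im(f)$ contains a dense open subset of $\overline{\im(f)}$. Cutting $\overline{\im(f)}$ with sufficiently generic linear subspaces through $O$ would produce an irreducible curve $C \subseteq \overline{\im(f)}$ through $O$ whose generic point lies in $\im(f)$. I would then choose an irreducible component $X \subseteq f^{-1}(C)$ dominating $C$, take $\tilde X$ to be the normalization of its projective closure in $\proj^n$, and pick a closed point $P \in \tilde X$ mapping to $O$ under the extended morphism $\tilde X \to \bar C$. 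The completed local ring at $P$ is $\aff[[\epsilon]]$ for a local parameter $\epsilon$; expanding the coordinates of $\tilde X \subseteq \proj^n$ in this chart would give $x(\epsilon) \in \aff((\epsilon))^n$ (poles arising only if $P$ lies at infinity in $\proj^n$). Because $P$ maps to the affine point $O \in \aff^m$, the functions $f_i(x(\epsilon))$ are regular and vanish at $P$, placing them in $\epsilon\aff[[\epsilon]]$.

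The second stage, a \emph{pole-order bound}, is what I expect to be the main obstacle: I must arrange that each $x_i(\epsilon) \in \epsilon^{-D}\aff[[\epsilon]]$, i.e.~the pole order at $P$ is at most $D$. The plan is to combine Perron's bound with valuation analysis at $P$, following the approach of \cite{LL89}: after a generic linear change of coordinates on the $x_j$'s if needed, for each $i$ one obtains an algebraic dependence $A_i(\mathbf Y, Z) = \sum_{j=0}^{d_i} a_{i,j}(\mathbf Y)\,Z^j$ among $\{f_1, \ldots, f_m, x_i\}$ with $\deg A_i \leq D$. Substituting the parametrization gives $A_i(f(x(\epsilon)), x_i(\epsilon)) = 0$ in $\aff((\epsilon))$, and arranging---by the genericity of $C$---that the leading coefficient $a_{i,d_i}(f(x(\epsilon)))$ is a unit in $\aff[[\epsilon]]$ forces $v_P(x_i) \geq -d_i \geq -D$. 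The delicate point is handling the case $n > m$, where the required dependence is not automatic and must be engineered via the linear change.

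Finally, \emph{truncation}: given $x_i \in \epsilon^{-D}\aff[[\epsilon]]$, I would set $x_i^{(D')}(\epsilon) := \sum_{j=-D}^{D'} c_{i,j}\epsilon^j$, the truncation of the Laurent expansion. For any monomial $\mu$ of $f_k$ of degree $e \leq D'/D$, each summand of $\mu(x) - \mu(x^{(D')})$ has $\epsilon$-valuation at least $-De + (D' + D + 1) \geq D+1$, so $f_k(x^{(D')}) \equiv f_k(x) \pmod{\epsilon\aff[[\epsilon]]}$. Since $f_k(x^{(D')})$ is itself a Laurent polynomial with no negative-power terms, it lies in $\epsilon\aff[\epsilon]$, yielding the desired $x \in (\epsilon^{-D}\aff[\epsilon] \cap \epsilon^{D'}\aff[\epsilon^{-1}])^n$.
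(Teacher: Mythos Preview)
Your reverse direction and the truncation step are correct and essentially match the paper. The forward direction, however, has two genuine gaps.

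First, your curve $C$ lives in the \emph{range} $\aff^m$, and an irreducible component $X$ of $f^{-1}(C)$ dominating $C$ has dimension $n-k+1$ where $k=\mathrm{trdeg}(\mathbf f)$; whenever $k<n$ this is not a curve, so the completed local ring at $P\in\tilde X$ is not $\aff[[\epsilon]]$ and no one-parameter formal arc comes out. The paper instead invokes Lemma~\ref{lem_aux1} (from \cite{LL89}) to produce a curve $C\subseteq\aff^n$ directly in the \emph{domain}, with $O\in\overline{f(C)}$ and $\deg(C)\le\deg(\Gamma_f)\le D$.

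Second, and more seriously, your Perron-based pole bound does not deliver $v_P(x_i)\ge -d_i$. If the leading coefficient $a_{i,d_i}(f(x(\epsilon)))$ were a unit in $\aff[[\epsilon]]$ (equivalently $a_{i,d_i}(O)\ne 0$, a condition independent of $C$), then the relation $\sum_j a_{i,j}(f(x(\epsilon)))\,x_i(\epsilon)^j=0$ would make $x_i(\epsilon)$ \emph{integral} over $\aff[[\epsilon]]$, forcing $v_P(x_i)\ge 0$ rather than $-d_i$; and if this held for every $i$, setting $\epsilon=0$ would exhibit a preimage of $O$ in $\aff^n$, contradicting the interesting case $O\notin\im(f)$. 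Hence some $a_{i,d_i}$ must vanish at $O$, and then the pole order of $x_i$ is controlled by the vanishing order of $a_{i,d_i}$ along the branch through $O$, which is not bounded by $D$. The paper sidesteps this entirely via Lemma~\ref{lem_aux2}: on the normalization of a degree-$d$ affine curve in $\aff^n$ each coordinate function $x_i$ has pole order at most $d$ at any point, which together with $\deg(C)\le D$ gives the required bound immediately.
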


The proof of Theorem~\ref{thm_approx} is almost the same as that in \cite{LL89}. First, we recall a tool to reduce the domain from a variety to a curve, proven in \cite{LL89}.

\begin{lemma}{\em \cite[Prop.1]{LL89}}\label{lem_aux1}
Let $V\subseteq \aff^n$, $W\subseteq \aff^m$ be affine varieties, $\varphi: V\to W$ dominant, and $t\in W\setminus \varphi(V)$.
Then there exists a curve $C\subseteq \aff^n$ such that $t\in\overline{\varphi(C)}$ and $\deg(C)\leq \deg(\Gamma_\varphi)$, where $\Gamma_\varphi$ denotes the graph of $\varphi$ embedded in $\aff^n\times \aff^m$.
\end{lemma}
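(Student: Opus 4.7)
The plan is to cut the graph $\Gamma_\varphi\subseteq \aff^n\times \aff^m$ down to a curve by generic linear sections, arranged so that $t$ remains in the Zariski closure of the projection to $\aff^m$, and then to take the image of that curve under $\pi_1\colon \aff^n\times \aff^m\to \aff^n$ as the required $C$.

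\textbf{Cutting the target.} Set $d:=\dim \Gamma_\varphi=\dim V$; since $\varphi$ is dominant, $\pi_2|_{\Gamma_\varphi}\colon \Gamma_\varphi\to W$ is dominant and $t\in W\setminus \pi_2(\Gamma_\varphi)$. I would first choose generic hyperplanes $H_1,\dots,H_{\dim W-1}$ in $\aff^m$, all passing through $t$; a standard pencil/Bertini argument shows that $W':=W\cap H_1\cap\cdots\cap H_{\dim W-1}$ is a $1$-dimensional subvariety of $W$ containing $t$ (each generic hyperplane through $t$ meets $W$ properly and drops dimension by one). Pulling these back through $\pi_2$, set $\Gamma_1:=\Gamma_\varphi\cap\bigcap_j\pi_2^{-1}(H_j)$; by genericity each pullback meets $\Gamma_\varphi$ properly, so $\dim \Gamma_1=d-\dim W+1$ and $\pi_2|_{\Gamma_1}\colon \Gamma_1\to W'$ remains dominant.

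\textbf{Cutting $\Gamma_\varphi$ down to a curve.} If $d>\dim W$, I would further intersect $\Gamma_1$ with $d-\dim W$ generic hyperplanes in $\aff^n$ pulled back through $\pi_1$. Because the generic fiber of $\pi_2|_{\Gamma_1}$ has dimension $d-\dim W$, these $\pi_1$-cuts can be chosen to drop dimension by exactly one each while leaving the restricted projection dominant onto $W'$. The outcome is a curve $\tilde C\subseteq \Gamma_\varphi$ with $\overline{\pi_2(\tilde C)}=W'\ni t$. Taking $C:=\pi_1(\tilde C)\subseteq \aff^n$, the fact that $\pi_1|_{\Gamma_\varphi}$ is an isomorphism onto $V$ (inverse $x\mapsto(x,\varphi(x))$) forces $C$ to be a curve, and $\varphi(C)=\pi_2(\tilde C)$ has $t$ in its closure.

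\textbf{Degree bound.} In total $\Gamma_\varphi$ has been intersected with $d-1$ hyperplanes of the ambient space $\aff^n\times \aff^m$ (some pulled from each factor). A B\'ezout-style inequality in a suitable projective compactification of $\aff^n\times \aff^m$ gives $\deg \tilde C\leq \deg \Gamma_\varphi$. Since $\pi_1|_{\tilde C}$ is a birational morphism onto $C$, the degree does not increase under this projection, yielding $\deg C\leq \deg \Gamma_\varphi$ as required.

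\textbf{Main obstacle.} The delicate point is choosing all the hyperplanes so that every desirable property holds simultaneously: each intersection must drop dimension by exactly one; the restricted projection must stay dominant onto the shrinking target; $t$ must remain in the closure of the image through every stage; and the final degree accounting must match in the chosen compactification. The degree bookkeeping is the trickiest piece, since hyperplanes pulled back from the two factors of $\aff^n\times \aff^m$ are not directly interchangeable under B\'ezout in a naive product embedding, and one may need to argue in a multiprojective or Segre setting so that $d-1$ hyperplane sections uniformly contribute a factor of $1$ to the final degree.
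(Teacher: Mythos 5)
The paper does not prove this lemma; it only cites it as \cite[Prop.1]{LL89}, so there is no in-text argument to compare against. On its own terms, your strategy --- slice the graph $\Gamma_\varphi$ by hyperplanes pulled back first from the target (through $t$) and then from the source, take the resulting curve and push it down by $\pi_1$ --- is the right kind of argument and almost certainly the shape of the proof in \cite{LL89}. Two of your stated worries are not real obstacles, but one genuine gap remains unaddressed.

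Your ``degree bookkeeping'' worry is a red herring: a linear form involving only the $Y$-variables (or only the $X$-variables) is still a genuine hyperplane in $\aff^{n+m}$, and ordinary B\'ezout in $\proj^{n+m}$ applies. Every proper hyperplane section drops dimension by one and keeps degree at most $\deg\Gamma_\varphi$; passing to the affine part can only lose intersections at infinity, and a birational linear projection can only decrease degree. So there is no need for a multiprojective or Segre argument. Likewise, requiring $\pi_1|_{\tilde C}$ to be birational is automatic since $\pi_1|_{\Gamma_\varphi}$ is an isomorphism onto $V$.

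The real gap is the interplay between irreducibility and keeping $t$ in the closure of the image. Hyperplanes constrained to pass through the fixed base point $t$ are outside the scope of Bertini, so $W':=W\cap H_1\cap\cdots$ and the later sections of $\Gamma_\varphi$ need not be irreducible. You must therefore work with the component $W'_0$ of $W'$ that contains $t$, and argue that the curve you produce dominates \emph{that} component. Since $t\notin \varphi(V)$, the point $t$ sits inside the proper closed set $Z:=W\setminus\varphi(V)$, which may have codimension exactly one in $W$; one must check that for generic hyperplanes through $t$, the component $W'_0$ through $t$ is \emph{not} swallowed by $Z$ (otherwise $\varphi(V)$ misses $W'_0$ entirely and $t\notin\overline{\varphi(C)}$). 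This follows --- the argument is that $Z\cap H_1\cap\cdots$ has dimension strictly less than $\dim W'_0$ once the $H_j$ avoid components of $Z$, so $W'_0\not\subseteq Z$ --- but it is not automatic and is the point where $t\notin\varphi(V)$ actually matters, so it deserves to be spelled out rather than absorbed into ``a standard pencil/Bertini argument.'' Similarly, at the end you should pass to the irreducible component of $\tilde C$ whose $\pi_2$-image is dense in $W'_0$ (one exists because $W'_0$ is irreducible and covered by the finitely many $\overline{\pi_2(\tilde C_i)}$), rather than asserting that the section is already a curve.
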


Next, \cite{LL89} essentially shows that in the case of a curve one can approximate the preimage of $f$ by using a {\em single} formal variable $\epsilon$ and working in $\aff(\epsilon)$.

\begin{lemma}{\em \cite[Cor.~of Prop.3]{LL89}}\label{lem_aux2}
Let $C\subseteq \aff^n$ be an affine curve.  Let $f: C\to \aff^m$ be a morphism sending $x\in C$ to $(f_1(x),\dots,f_m(x))\in \aff^m$, where $f_1,\dots,f_m\in \aff[X_1,\dots,X_n]$. Let $t=(t_1,\dots,t_m)\in \overline{f(C)}$. Then there exists  $p_1,\dots,p_n\in \epsilon^{-\deg(C)}\aff[[\epsilon]]$ such that $f_i(p_1,\dots,p_n)-t_i \,\in\, \epsilon \aff[[\epsilon]]$ , for all $i\in [m]$.
\end{lemma}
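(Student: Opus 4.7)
The plan is to reduce to a smooth projective model of $C$, where the existence of the required Laurent-series preimage becomes an immediate consequence of properness, and then to bound the pole orders via a classical hyperplane-section argument. First I would pass to the projective closure $\overline{C}\subseteq \proj^n$ under the standard embedding $\aff^n\hookrightarrow \proj^n$, and then to its normalization $\pi:\widetilde{C}\to \overline{C}$, which is a smooth projective curve. Since any rational map from a smooth projective curve into projective space extends to a morphism, the morphism $f:C\to \aff^m\hookrightarrow \proj^m$ lifts to a morphism $\widetilde{f}:\widetilde{C}\to \proj^m$. By properness of $\widetilde{C}$, the image $\widetilde{f}(\widetilde{C})$ is closed and contains $f(C)$, hence contains $\overline{f(C)}\ni t$; so there exists $P\in \widetilde{C}$ with $\widetilde{f}(P)=t$ (viewing $t$ as a point of $\aff^m\subseteq \proj^m$).

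Next I would pick a uniformizer $\epsilon\in \mathcal{O}_{\widetilde{C},P}$ and identify the completion $\widehat{\mathcal{O}}_{\widetilde{C},P}$ with $\aff[[\epsilon]]$ (valid since $\widetilde{C}$ is smooth of dimension $1$ over the algebraically closed field $\aff$). The affine coordinates $X_1/X_0,\dots,X_n/X_0$, pulled back along $\pi$, are rational functions on $\widetilde{C}$ and have Laurent-series expansions $p_i\in \aff((\epsilon))$ at $P$. The identity $\widetilde{f}(P)=t\in \aff^m$ means each $f_i(X_1/X_0,\dots,X_n/X_0)$ is regular at $P$ with value $t_i$; substituting the $p_i$'s then yields $f_i(p_1,\dots,p_n)-t_i\in \epsilon\aff[[\epsilon]]$ for every $i\in[m]$, which is exactly the required approximation condition.

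It remains to bound the pole order $-\mathrm{ord}_P(p_i)$ by $\deg(C)$. If $P$ lies above a finite point of $\overline{C}$, then $p_i\in \aff[[\epsilon]]\subseteq \epsilon^{-\deg(C)}\aff[[\epsilon]]$ trivially. Otherwise, $P$ lies above a point of the hyperplane at infinity $H:=\{X_0=0\}$, and since $p_i=(X_i\circ\pi)/(X_0\circ\pi)$ we obtain $-\mathrm{ord}_P(p_i)\leq \mathrm{ord}_P(X_0\circ\pi)$. The classical degree formula for a hyperplane section of a projective curve gives $\sum_{Q\in \widetilde{C}}\mathrm{ord}_Q(X_0\circ\pi)=\deg(\overline{C})=\deg(C)$, and in particular $\mathrm{ord}_P(X_0\circ\pi)\leq \deg(C)$, so $p_i\in \epsilon^{-\deg(C)}\aff[[\epsilon]]$. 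The main technical nuisance I expect is the case of reducible $C$: I would restrict to an irreducible component $C'\subseteq C$ with $t\in \overline{f(C')}$ (such a component exists because $\overline{f(C)}$ is the union of the $\overline{f(C_j)}$) and run the argument on $C'$, noting that $\deg(C')\leq \deg(C)$ preserves the final bound. The degenerate subcase $f(C')=\{t\}$ is handled trivially by taking constant $p_i$'s equal to the coordinates of any point of $C'$.
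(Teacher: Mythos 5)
The paper does not supply a proof of this lemma: it is cited verbatim as a corollary of Proposition 3 of Lehmkuhl and Lickteig \cite{LL89}, so there is no internal argument to compare against. Your argument is, however, a correct and essentially standard proof of the cited fact, and is likely close in spirit to the one in \cite{LL89}. The reduction to the normalization $\widetilde{C}$ of the projective closure $\overline{C}\subseteq\proj^n$, the extension of $f$ to a morphism $\widetilde{f}:\widetilde{C}\to\proj^m$ using that a rational map from a smooth projective curve to a projective variety is everywhere defined, and the conclusion $t\in\widetilde{f}(\widetilde{C})$ from properness are all sound; so is reading off the $p_i$ as Laurent expansions of $X_i/X_0$ at a preimage $P$ of $t$. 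The pole-order bound also checks out: for $P$ over a point at infinity one has $-\mathrm{ord}_P(X_i/X_0)\le\mathrm{ord}_P(\pi^*X_0)\le\deg\bigl(\pi^*\mathcal{O}_{\proj^n}(1)\bigr)=\deg(\overline{C})=\deg(C)$, the penultimate equality holding because the normalization map is birational onto $\overline{C}$. Two small remarks: in the paper's conventions (Appendix~\ref{app-AG}) a curve is by definition a one-dimensional \emph{variety}, hence irreducible, so the reducible case you flag at the end does not arise; and the degenerate case $f(C)=\{t\}$ is already absorbed by the general argument, since $\widetilde{f}$ is then a constant morphism and any $P\in\widetilde{C}$ over a finite point yields regular $p_i$.
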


Finally, we can use the above two lemmas to prove the connection of APS with $O\in \overline{\im(f)}$, and hence with AnnAtZero (by Lemma \ref{lem_geom}).

\begin{proof}[Proof of Theorem~\ref{thm_approx}]
First assume that an $x$, satisfying the conditions in Theorem~\ref{thm_approx}, exists. Pick such an $x$. If $\mathbf f$ are algebraically independent then by Lemma \ref{lem_geom} we have that $\aff^m=\overline{\im(f)}$ and we are done. So, assume that there is a nonzero annihilator $Q$ for $\mathbf f$.
We have $Q(f_1(x),\dots,f_m(x))=0\in\epsilon \aff[\epsilon]$. On the other hand, as $f_i(x)\in \epsilon \aff[\epsilon]$, for all $i\in [m]$; we deduce that $Q(f_1(x),\dots,f_m(x))\bmod \epsilon \aff[\epsilon]$ is $Q(\mathbf 0)$, which is the constant term of $Q$. So it equals zero. By Lemma~\ref{lem_geom}, we have $O\in \overline{\im(f)}$ and again we are done.

Conversely, assume $O\in \overline{\im(f)}$ and we will prove that $x$ exists.
If $O\in \im(f)$, then we can choose $x\in \aff^n$ and we are done. So assume $O\in  \overline{\im(f)}\setminus \im(f)$.
Regard $f$ as a dominant morphism from $\aff^n$ to $\overline{\im(f)}$. Its graph $\Gamma_f$ is cut out in $\aff^n\times \aff^m$ by $Y_i-f_i(X_1,\dots,X_n)$, $i\in[m]$. So $\deg(\Gamma_f)\leq \prod_{i=1}^m \deg(f_i) = D$ by B\'ezout's Theorem.

By Lemma~\ref{lem_aux1}, there exists a curve $C\subseteq \aff^n$ such that $O\in \overline{f(C)}$ and $\deg(C)\leq \deg(\Gamma_f)\leq D$. Pick such a curve $C$. Apply Lemma~\ref{lem_aux2} to $C$, $f|_C$ and $O$, and let $p_1,\dots,p_n\in \epsilon^{-\deg(C)} \aff[[\epsilon]]\subseteq \epsilon^{-D} \aff[[\epsilon]]$ be as given by the lemma.
Then $f_i(p_1,\dots,p_n)\in \epsilon \aff[[\epsilon]]$, for all $i\in [m]$. 

For $i\in [n]$, let $x_i$ be the Laurent polynomial obtained from $p_i$ by truncating the terms of degree greater than $D'$.
When evaluating $f_1,\dots,f_m$, at $(p_1,\dots,p_n)$, such truncation does not affect the coefficient of $\epsilon^k$ for $k\leq 0$ by the choice of $D'$.
So $f_i(x_1,\dots,x_n)\in \epsilon \aff[\epsilon]$,  for all $i\in [m]$.
\end{proof}

\noindent
{\em Remark--} The lower bound $-D=-\prod_{i=1}^m \deg(f_i)$ for the least degree of $x_i$ in $\epsilon$ can be achieved up to a  factor of $1+o(1)$. Consider the polynomials $f_1=f_2=X_1$, $f_3=X_1^{d-1}X_2-1$, and $f_i=X_{i-2}^d-X_{i-1}$ for $i=4,\dots,m$, where $m=n+1$. Then we are forced to choose $x_1\in  \epsilon \aff[\epsilon]$ and $x_i \,\in\, \epsilon^{-(d-1)d^{i-2}}\cdot\aff[\epsilon^{-1}] $, for $i=2,\dots, n$. So the least degree of $x_n$ in $\epsilon$  is at most $-(d-1)d^{n-2}$, while $-D=-d^{n-1}$.

\subsection{Putting APS in PSPACE}
\vspace{-0.05in}

Owing to the exponential upper bound on the precision (= degree wrt $\epsilon$) shown in Theorem \ref{thm_approx}, one expects to solve APS in EXPSPACE only. Surprisingly, in this section, we give a PSPACE algorithm. This we do by reducing the general AnnAtZero instance to a very special instance, that is easy to solve.

Let $\aff$ be the algebraic closure of the field $\F$. Let $f_1,\dots,f_m\in \F[X_1,\dots,X_n]$ be given. 
Denote by $k$ the trdeg of $\F(f_1,\dots,f_m)/\F$. Computing $k$ can be done in PSPACE using linear algebra \cite{Plo05,C76}. 
We assume $k<m-1$, since the cases $k=m-1$ and $k=m$ are again easy to solve in PSPACE using linear algebra.

We reduce the number of polynomials from $m$ to $k+1$ as follows:
Fix a finite subset $S\subseteq \F$,  and choose $c_{i,j}\in S$ at random for $i\in [k+1]$ and $j\in [m]$. For this to work, we need a large enough $S$ and $\F$.
For $i\in [k+1]$, let $g_i:= \sum_{j=1}^m c_{i,j} f_j$. 

Let  $\delta:= (k+1)(\max_{i\in [m]}\deg(f_i))^k/|S|$. Our algorithm is immediate once we prove the following claim.
\begin{theorem}[Random reduction]\label{thm-randReduct}
It holds, with probability $\ge (1-\delta)$, that  

\smallskip\noindent
(1) the transcendence degree of $\F(g_1,\dots,g_{k+1})/\F$ equals $k$, and

\noindent
(2) the constant term of every annihilator for $g_1,\dots,g_{k+1}$ is zero  iff the constant term of  every annihilator for $f_1,\dots,f_{m}$ is zero.
\end{theorem}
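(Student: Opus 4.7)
The plan is to translate both parts geometrically via Lemma~\ref{lem_geom} and then bound a single ``bad event'' on the random matrix $C$ through Schwartz--Zippel. Let $V := \overline{\im(f)} \subseteq \aff^m$, which has dimension $k$ by the definition of trdeg. Writing $\pi : \aff^m \to \aff^{k+1}$ for the linear map with matrix $C=(c_{i,j})$, we have $g = \pi \circ f$ and $V' := \overline{\im(g)} = \overline{\pi(V)}$; set $W := \ker(\pi)$, which has dimension $m-k-1$ when $C$ has full rank. Lemma~\ref{lem_geom} recasts part~(2) as the biconditional ``$O_m \in V \Leftrightarrow O_{k+1} \in V'$'', and part~(1) as $\dim V' = k$.

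Both parts will follow from one geometric event on $C$: that the projective closures $\tilde V, \tilde W \subseteq \proj^m$ are essentially disjoint (meeting only at $O_m$, if at all). Since $\dim \tilde V + \dim \tilde W = k+(m-k-1) = m-1 < m$, this is the expected generic behaviour. Granting it, part~(1) follows because $\tilde V \cap \tilde W \cap H_\infty = \emptyset$, which forces every affine fiber $V \cap (p+W)$ of $\pi|_V$ to be $0$-dimensional (any positive-dimensional affine variety must acquire points at infinity in its projective closure), so $\dim V' = \dim V = k$. For part~(2), the direction $O_m \in V \Rightarrow O_{k+1} \in V'$ is immediate from Zariski-continuity of $\pi$. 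The converse is the main obstacle: affine disjointness $V \cap W = \emptyset$ alone does not prevent $O_{k+1} \in \overline{\pi(V)}$, because $V$ can approach $W$ ``at infinity'' (as the classical hyperbola $V = \{XY = 1\}$ projected by $\pi(X,Y) = X$ shows, where $V\cap W=\emptyset$ yet $O\in\overline{\pi(V)}$). To close this gap I would invoke a projective criterion---the forthcoming Lemma~\ref{lem_projcriterion}---which says that once $\tilde V \cap \tilde W = \emptyset$ in $\proj^m$, the map $\pi|_V$ extends to a morphism on $\tilde V$ and is therefore proper, so $\pi(V) = V'$ is closed, whence $O_m \notin V$ forces $O_{k+1} \notin V'$.

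It remains to bound the probability that this transversality event fails. The $k+1$ rows of $C$ cut out $k+1$ random hyperplanes $H_1,\dots,H_{k+1}$ through the origin in $\aff^m$, and I would iterate the principle that a generic hyperplane section reduces the dimension of a subvariety by one. Using a Perron-type degree bound $\deg(\tilde V)\le (\max_i\deg f_i)^k$, after the first $k$ cuts the intersection $\tilde V \cap H_1 \cap \cdots \cap H_k$ is $0$-dimensional of size at most $(\max_i\deg f_i)^k$; Schwartz--Zippel then shows that each such point survives the $(k+1)$-st cut with probability at most $1/|S|$. A union bound over all $k+1$ steps (each ``failure to reduce dimension'' event being expressible as the vanishing of a polynomial in the $c_{i,j}$ of degree $\le (\max_i\deg f_i)^k$) yields total failure probability $\le \delta = (k+1)(\max_i\deg f_i)^k / |S|$, establishing (1) and (2) simultaneously.
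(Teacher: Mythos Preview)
Your proposal is correct and follows the paper's overall architecture: reformulate via Lemma~\ref{lem_geom}, pass to projective closures to tame the ``approaching $W$ at infinity'' phenomenon, then bound a single transversality event by iterated hyperplane cuts with a degree bound on $V_c$. Two points of genuine (and interesting) divergence from the paper are worth noting.

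First, for the implication ``$V_c\cap W_c=\emptyset \Rightarrow O'\notin V'$'' (Lemma~\ref{lem_projcriterion}(2)), the paper argues via the \emph{join} $J(V_c,W_H)$: it identifies $J(V_c,W_H)\cap\aff^m=\pi^{-1}(V')$ by a dimension count and then derives a contradiction from $W\subseteq\pi^{-1}(V')$. Your route is cleaner: since the indeterminacy locus of the projectivized map $\tilde\pi:\proj^m\dashrightarrow\proj^{k+1}$, $(X_0{:}\cdots{:}X_m)\mapsto(X_0{:}\sum_j c_{1,j}X_j{:}\cdots)$, is exactly $W_H$, the hypothesis $V_c\cap W_H=\emptyset$ makes $\tilde\pi|_{V_c}$ a morphism with closed image, and one checks $\tilde\pi(V_c)\cap\aff^{k+1}=\pi(V)$, so $\pi(V)$ is already closed. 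Then $O'\in V'=\pi(V)$ would force $V\cap W\neq\emptyset$, contradicting $V_c\cap W_c=\emptyset$. (Your phrase ``whence $O_m\notin V$ forces $O_{k+1}\notin V'$'' slightly understates what is used: you need $V\cap W=\emptyset$, not merely $O\notin V$, but this is of course part of $V_c\cap W_c=\emptyset$.)

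Second, for the probability bound the paper separates the two cases: when $O\in V$ it works with $V_c\cap H$ inside $H\cong\proj^{m-1}$; when $O\notin V$ it first applies the projection $\pi_{O,H}:V_c\to H$ from $O$, so that in both cases the random linear forms $\sum_{j=1}^m c_{i,j}X_j$ become \emph{unconstrained} hyperplanes in $\proj^{m-1}$ and Lemma~\ref{lem_hyperplane} applies verbatim. Your unified argument cuts $V_c$ directly in $\proj^m$ by hyperplanes that all contain $O$; this still works because every positive-dimensional component (and every $0$-dimensional point other than $O$) of an intermediate slice contains a point $P\neq O$, whence membership in the next hyperplane is a nontrivial linear condition on that row of $C$. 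The resulting bound $(k+1)\deg(V_c)/|S|\le\delta$ matches the paper's. Both packagings are fine; the paper's projection-to-$H$ trick has the virtue of reducing to a clean ``fully generic hyperplane'' lemma, while yours avoids the case split.
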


First, we reformulate the two items of Theorem~\ref{thm-randReduct} in a geometric way, and later we will analyze the error probability.

For $d\in\N$, denote by $\aff^d$ (resp.~$\proj^d$) the $d$-dimensional affine space (resp.~projective space) over $\aff:=\ol{\F}$.
 Let $f: \aff^n\to \aff^m$ (resp.~$g: \aff^n\to \aff^{k+1}$) be the polynomial map sending $x$ to $(f_1(x),\dots,f_m(x))$ (resp.~$(g_1(x),\dots,g_{k+1}(x))$). Let $O$ and $O'$ be the origin of $\aff^m$ and that of $\aff^{k+1}$ respectively.
Define the affine varieties $V:= \overline{\im(f)}\subseteq \aff^m$ and $V':= \overline{\im(g)}\subseteq \aff^{k+1}$. 
Then $\dim V= \text{trdeg } \mathbf f = k$.

Let $\pi: \aff^m\to \aff^{k+1}$ be the linear map sending $(x_1,\dots,x_m)$ to $(y_1,\dots,y_{k+1})$ where $y_i= \sum_{j=1}^m c_{i,j} x_j$.
Then $g= \pi\circ f$ and $V'= \overline{\pi(V)}$.\footnote{To see $V'\supseteq\overline{\pi(V)}$, note that $\pi^{-1}(V')$ contains $\im(f)$ and is closed, and hence contains $V=\overline{\im(f)}$.}
 Now (1) of  Theorem~\ref{thm-randReduct} is equivalent to $\dim V'=k$, and (2) is equivalent to $O'\in V'$ iff  $O\in V$.
\[
\begin{tikzcd}[column sep=large]
\aff^n \arrow{r}{f} \arrow[swap]{rd}{g}
& V= \overline{\im(f)}  \arrow{r}{\subseteq} \arrow{d}{\pi|_V} & \aff^m \arrow{d}{\pi}\\
& V'= \overline{\im(g)} \arrow{r}{\subseteq} & \aff^{k+1}
\end{tikzcd}
\] 
 We will give sufficient conditions of (1) and (2) in terms of incidence properties.
 Note that $O\in V$ implies $O'\in V'$, since $\pi(O)=O'$. Now suppose $O\not\in V$. Let $W:= \pi^{-1}(O')$, which is a linear  subspace of $\aff^m$.
  Then $O'\not\in \pi(V)$ iff $V\cap W=\emptyset$. However, $V\cap W=\emptyset$ does not imply $O'\not\in V'$, as $V$ may  ``get infinitesimally close to $W$''  without actually meeting $W$, so that $O'\in\overline{\pi(V)}=V'$. See Example \ref{exmp_reduction} in the appendix.
  
  To overcome this problem, we consider projective geometry instead of affine geometry.
  Suppose $\aff^m$ have coordinates $X_1,\dots,X_m$ and $\proj^m$ have homogeneous coordinates $X_0,\ldots,X_m$.
  Regard $\aff^m$ as a dense open subset of $\proj^m$ via $(x_1,\dots,x_m)\mapsto (1,x_1,\dots,x_m)$. Then $H:= \proj^m\setminus \aff^m \,\cong \proj^{m-1}$ is the {\em hyperplane at infinity}, defined by $X_0=0$.
  Denote by $V_c$ (resp.~$W_c$) the {\em projective closure} of $V$ (resp.~$W$) in $\proj^m$. Then $V = V_c\cap \aff^m$.
  Let $W_H:= W_c\cap H$, which is a projective subspace of $H$. 
  
  For distinct points $P,Q\in\proj^m$, write $\overline{PQ}$ for the projective line passing through them. 
\begin{lemma}[Sufficient condns]\label{lem_projcriterion} We have:

\smallskip\noindent
(1) $\dim V'=k$, if ~$V_c\cap W_H=\emptyset$, and

\noindent 
(2) $O'\not\in V'$, if ~$V_c\cap W_c=\emptyset$.
\end{lemma}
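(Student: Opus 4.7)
The plan is to lift the linear map $\pi \colon \aff^m \to \aff^{k+1}$ to a rational map $\tilde{\pi} \colon \proj^m \dashrightarrow \proj^{k+1}$ given on homogeneous coordinates by $[X_0 : X_1 : \cdots : X_m] \mapsto [X_0 : L_1 : \cdots : L_{k+1}]$, where $L_i := \sum_{j=1}^m c_{i,j} X_j$. Its base locus is the common vanishing of all $k+2$ components, i.e.~$X_0 = 0$ together with $L_1 = \cdots = L_{k+1} = 0$, which is precisely $W_c \cap H = W_H$. So the hypothesis $V_c \cap W_H = \emptyset$ of (1) says exactly that $\tilde{\pi}|_{V_c}$ is a well-defined morphism, and the hypothesis of (2) is formally stronger since $W_H \subseteq W_c$.

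For part (1), I would show that every fiber of $\pi|_V \colon V \to V'$ is finite, which forces $\dim V' = \dim V = k$. For $y \in \aff^{k+1}$, the fiber $F_y := V \cap \pi^{-1}(y)$ is cut out inside $V$ by an affine translate $y_0 + W$ of $W$. In homogeneous coordinates, translation changes only the $X_0$-coefficient of each defining linear form, so the projective closure of $y_0 + W$ meets $H$ in the same locus as $W_c$ does, namely $W_H$. Hence $\overline{F_y} \cap H \subseteq V_c \cap W_H = \emptyset$, so $\overline{F_y}$ is a projective variety contained in the affine chart $\aff^m$, and must therefore be a finite set of points.

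For part (2), assume $V_c \cap W_c = \emptyset$. Then $V_c \cap W_H = \emptyset$ a fortiori, so $\tilde{\pi}|_{V_c}$ is a morphism; since $V_c$ is complete, the image $\tilde{\pi}(V_c)$ is closed in $\proj^{k+1}$. It contains $\pi(V)$, so intersecting with the affine chart yields $\tilde{\pi}(V_c) \cap \aff^{k+1} \supseteq \overline{\pi(V)} = V'$. Suppose for contradiction $O' \in V'$, and lift it to the projective point $\tilde{O}' := [1 : 0 : \cdots : 0]$; then some $P \in V_c$ satisfies $\tilde{\pi}(P) = \tilde{O}'$. This forces $X_0(P) \neq 0$ together with $L_1(P) = \cdots = L_{k+1}(P) = 0$, so $P$ lies in the affine chart and in $W$, giving $P \in V \cap W \subseteq V_c \cap W_c$, contradicting the hypothesis.

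The main obstacle I anticipate is the projective-versus-affine bookkeeping: pinning down that the base locus of $\tilde{\pi}$ is exactly $W_H$, and then ensuring that a hypothetical $O' \in V'$ produces a point $P \in V_c$ that actually lands back in the affine chart (so $P \in W$, not merely in $W_c$ at infinity). Once this dictionary between $\tilde{\pi}$ and the pair $(V_c, W_c)$ is in place, both conclusions reduce to the standard fact that a morphism out of a complete variety is closed.
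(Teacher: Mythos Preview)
Your argument is correct. For part (1), your approach and the paper's are contrapositives of the same idea: both observe that each affine fiber $V\cap\pi^{-1}(y)$ has projective closure meeting $H$ only inside $V_c\cap W_H$, so under the hypothesis the fibers are zero-dimensional and the fiber-dimension inequality gives $\dim V'=k$.

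For part (2), you take a genuinely different and shorter route. The paper constructs the join $J(V_c,W_H)$, proves via a dimension count that $J(V_c,W_H)\cap\aff^m=\pi^{-1}(V')$, and then argues that $W\subseteq\pi^{-1}(V')$ would force some $W_P$ to coincide with $W$, placing $P\in V\cap W$. You instead extend $\pi$ to the rational map $\tilde\pi:\proj^m\dashrightarrow\proj^{k+1}$, identify its base locus as $W_H$, and use completeness of $V_c$ to conclude that $\tilde\pi(V_c)$ is closed in $\proj^{k+1}$; then a putative $O'\in V'\subseteq\tilde\pi(V_c)\cap\aff^{k+1}$ pulls back to a point of $V_c$ with $X_0\neq 0$ and all $L_i=0$, i.e.\ a point of $V\cap W$. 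Your argument is the standard ``projection from a center disjoint from the variety is a morphism, hence closed'' paradigm and avoids the join machinery entirely; the paper's approach, while longer, makes the identity $\pi^{-1}(V')=J(V_c,W_H)\cap\aff^m$ explicit, which is of independent geometric interest but not needed for the lemma as stated.
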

\begin{proof}
(1): Assume $\dim V'<k$. Choose $P\in\pi(V)$. The dimension of $\pi^{-1}(P)\cap V$ is at least $\dim V-\dim V' \,\geq 1$ \cite[Thm.11.12]{Har92}.
Denote by $Y$ and $Z$  the projective closure of $\pi^{-1}(P)$ and that of $\pi^{-1}(P)\cap V$ in $\proj^m$ respectively. Then $Z\subseteq Y\cap V_c$. As $\dim Z=\dim \pi^{-1}(P)\cap V\geq 1$ and $\dim H=m-1$, we have $Z\cap H\neq\emptyset$ \cite[Prop.11.4]{Har92}. 

As $\pi$ is a linear map, $\pi^{-1}(P)=Y\cap\aff^m$ is a translate of $\pi^{-1}(O')= W =W_c\cap\aff^m$. It is well known that two projective subspaces $W_1,W_2\not\subseteq H$ have the same intersection with $H$ iff $W_1\cap \aff^m$ and $W_2\cap \aff^m$ are translates of each other.\footnote{Indeed, $W_i\cap \aff^m$ is defined by linear equations $\sum_{j=1}^m a_{j,t} X_j +a_{0,t}=0$ iff $W_i\cap H$ is defined by homogeneous linear equations $X_0=0$ and $\sum_{j=1}^m a_{j,t} X_j=0$. So the constant terms $a_{0,t}$ do not matter.}
So, $Y\cap H=W_c\cap H=W_H$. Therefore,
$V_c\cap W_H \,=\, V_c\cap Y\cap H \,\supseteq\, Z\cap H \,\neq \emptyset
$.

(2): Assume to the contrary that   $V_c\cap W_c=\emptyset$ but $O'\in V'$. We will derive a contradiction. 
As $W_H\subseteq W_c$, we have  $V_c\cap W_H=\emptyset$ and hence $\dim V'=k$ by (1).

Denote by $J(V_c, W_H)$ the {\em join} of $V_c$ and $W_H$, which is defined to be the union of the projective lines $\overline{PQ}$, where $P\in V_c$ and $Q\in W_H$. It is known that $J(V_c, W_H)$, as the join of two {\em disjoint} projective subvarieties, is again a projective subvariety of $\proj^m$ \cite[Example~6.17]{Har92}. Consider $P\in V_c$ and $Q\in W_H$. If $P\in H$, the line $\overline{PQ}$ lies in $H$ and does not meet $\aff^m$. Now suppose $P\in V_c\setminus H=V$. Then $\overline{PQ}$ meets $\overline{OQ}$ at the point $Q$. So $\overline{PQ}\cap \aff^m$ is a translate of $\overline{OQ}\cap \aff^m \,\subseteq\, W_c\cap\aff^m = W$.

Conversely, let $P\in V$. Let $W_P$ denote the unique translate of $W$ containing $P$. Let $\ell_P$ be an affine line contained in $W_P$ and passing through $P$ (note that $W_P$ is the union of such lines). Then $\ell_P$ is a translate of an affine line $\ell\subseteq W$. As $\ell_P$ and $\ell$ are translates of each other, their projective closures intersect $H$ at the same point $Q$. We have $Q\in \ell\cap H \subseteq W_H$. So $\ell_P=\overline{PQ}\cap \aff^m\subseteq J(V_c, W_H)\cap \aff^m$. We conclude that
\begin{equation}\label{eq_join}
J(V_c, W_H)\cap \aff^m \,=\, \bigcup_{P\in V} W_P \,.
\end{equation}
We claim that $J(V_c,W_H)\cap \aff^m=\pi^{-1}(V')$. As $\pi$ is a linear map, Equation \eqref{eq_join} implies $J(V_c,W_H)\cap \aff^m\subseteq\pi^{-1}(V')$. We prove the other direction by comparing dimensions.
It is known that for two {\em disjoint} projective subvarieties $V_1$ and $V_2$,  $\dim J(V_1,V_2)=\dim V_1+\dim V_2+1$ \cite[Prop.11.37-Ex.11.38]{Har92}. Therefore,
\[
\dim J(V_c, W_H)=\dim V_c +\dim W_H+1=\dim V+\dim W=k+\dim W \,.
\]
So, $\dim J(V_c,W_H) \cap \aff^m \,= k+\dim W$.
On the other hand, we have $\pi^{-1}(V')\cong V'\times W$. So $\dim \pi^{-1}(V')=\dim V'+\dim W=k+\dim W$. Now $J(V_c,W_H)\cap \aff^m$ and $\pi^{-1}(V')$ are (irreducible) affine varieties of the same dimension, and one is contained in the other. So they must be equal. This proves the claim.

As $O'\in V'$, we have $W=\pi^{-1}(O')\subseteq \pi^{-1}(V')=\bigcup_{P\in V} W_P$. So $W_P=W$ for some $P\in V$, since $W$ is a linear space. But then $P\in V\cap W_P=V\cap W\subseteq V_c\cap W_c$, contradicting the assumption $V_c\cap W_c=\emptyset$.
 \end{proof}
\noindent {\em Remark-- }
The converse of Lemma~\ref{lem_projcriterion} (Condition 2) is false; see Example \ref{exmp-projcrit} in the appendix.

\smallskip\noindent
{\bf Error probability.} It remains to bound the probability of failure of the conditions $V_c\cap W_H=\emptyset$ and (in the case $O\not\in V$) $V_c\cap W_c=\emptyset$ in Lemma~\ref{lem_projcriterion}. We need the following lemma.

\begin{lemma}[Cut by hyperplanes]\label{lem_hyperplane}
Let $V\subseteq\proj^m$ be a projective subvariety of dimension $r$ and degree $d$. Let $r'\geq r+1$. Choose $c_{i,j}\in S$ at random, for $i\in [r']$ and $0\leq j\leq m$.
Let $W\subseteq \proj^m$ be the projective subspace cut out by the equations $\sum_{j=0}^{m} c_{i,j} X_j=0$, $i=1,\dots,r'$, where $X_0,\dots,X_m$ are homogeneous coordinates of $\proj^m$. Then $V\cap W=\emptyset$ holds with probability at least $1-(r+1)d/|S|$.
\end{lemma}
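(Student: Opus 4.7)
The plan is to intersect $V$ with the random hyperplanes $H_1,\dots,H_{r'}$ one at a time, showing that each cut strictly drops the dimension with high probability, until after $r+1$ cuts the intersection becomes empty. Let $H_i\subseteq \proj^m$ denote the hyperplane defined by $\sum_{j=0}^m c_{i,j} X_j=0$, and set $V_0:=V$ and $V_i:=V_{i-1}\cap H_i$. Since $r'\ge r+1$ and $V_{r'}\subseteq V_{r+1}$, it suffices to prove that $V_{r+1}=\emptyset$ with probability at least $1-(r+1)d/|S|$.

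I would proceed by induction on $i\in\{0,1,\dots,r+1\}$, showing that with probability at least $1-id/|S|$ the variety $V_i$ satisfies both $\dim V_i\leq r-i$ (using $\dim\emptyset=-1$) and $\deg V_i\leq d$. The base case $i=0$ is the hypothesis on $V$. For the inductive step, condition on the hypothesis for $V_{i-1}$ and write its irreducible decomposition; since by the standard convention only top-dimensional components contribute to the degree, there are at most $\deg V_{i-1}\le d$ such components. Pick a point $P_j$ from each top-dimensional component. The event $P_j\in H_i$ is a nontrivial linear equation in the freshly sampled coefficients $(c_{i,0},\dots,c_{i,m})$, so by Schwartz--Zippel it holds with probability $1/|S|$; a union bound over the at most $d$ points shows that with probability at least $1-d/|S|$ no top-dimensional component of $V_{i-1}$ is contained in $H_i$. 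When this event occurs, the standard fact that proper intersection of an irreducible projective variety with a hyperplane drops its dimension by exactly one gives $\dim V_i\leq \dim V_{i-1}-1\leq r-i$ (lower-dimensional components of $V_{i-1}$ cannot obstruct, since their dimension is already strictly smaller), while Bezout gives $\deg V_i\leq \deg V_{i-1}\cdot \deg H_i\leq d$. A union bound with the inductive hypothesis completes the step.

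Applying the claim at $i=r+1$ yields $V_{r+1}=\emptyset$ with probability at least $1-(r+1)d/|S|$, which is the stated bound. The single most delicate point is the final cut: at that stage $V_r$ is a zero-dimensional scheme consisting of up to $d$ points, and we must force $H_{r+1}$ to miss every one of them, which is exactly where the factor of $d$ in the final error bound comes from. One might hope for a stronger per-step bound of $1/|S|$ at earlier stages when $V_i$ looks irreducible, but in general reducibility of the intermediate $V_i$ forces us to track up to $d$ representative points per step, and the uniform error of $d/|S|$ per cut telescopes cleanly into $(r+1)d/|S|$.
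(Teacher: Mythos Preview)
Your argument is essentially identical to the paper's: define $H_i$, set $V_i:=V_{i-1}\cap H_i$, pick one point per component, and use a union bound over at most $\deg(V_{i-1})\le d$ components to show each cut drops the dimension with failure probability $\le d/|S|$. One small remark: your appeal to the ``top-dimensional components only'' degree convention together with B\'ezout is not quite airtight as stated, since in general a lower-dimensional component of $V_{i-1}$ contained in $H_i$ could become top-dimensional in $V_i$ and inflate that count; the clean fix (which the paper implicitly uses) is either to take $\deg$ as the sum over \emph{all} components, or to observe that because $V$ is irreducible each successive proper hyperplane section keeps $V_i$ equidimensional, so no lower-dimensional components ever arise.
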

\begin{proof}
For $i\in [r']$, let $H_i\subseteq \proj^m$ be the hyperplane defined by $\sum_{j=0}^{m} c_{i,j} X_j=0$. By ignoring $H_i$ for $i>r+1$, we may assume $r'=r+1$.  Let $V_0:= V$ and $V_i:= V_{i-1}\cap H_i$ for $i\in [r']$. It suffices to show that $\dim V_i=\dim V_{i-1}-1$ holds with probability at least $1-d/|S|$, for each $i\in [r']$ (the dimension of the empty set is $-1$ by convention). 

Fix $i\in [r']$ and $c_{i',j}$, for $i'\in [i-1]$ and $0\leq j\leq m$. So $V_{i-1}$ is also fixed.  Note that $V_{i-1}\neq \emptyset$ since by taking a hyperplane section reduces the dimension  by at most one. If $\dim V_i\neq \dim V_{i-1}-1$, then $\dim V_i=\dim V_{i-1}$, and $H_i$ contains some irreducible component of $V_{i-1}$ \cite[Exercise~11.6]{Har92}. Let $Y$ be an irreducible component of $V_{i-1}$, and fix a point $P\in Y$. Then $Y\subseteq H_i$ only if $P\in H_i$, which holds only if  $c_{i,0},\dots,c_{i,m}$ satisfy a nonzero linear equation determined by $P$. This occurs with probability at most $1/|S|$ (eg.~by fixing all but one $c_{i,j}$). 
We also have $\deg(V_{i-1})\leq \deg(V)\leq d$, and hence the number of irreducible components of $V_{i-1}$ is bounded by $d$.
By the union bound, $H_i$ contains an irreducible component of $V_{i-1}$ with probability at most $d/|S|$.
\end{proof}
\begin{proof}[Proof of Theorem~\ref{thm-randReduct}]
As mentioned above, Theorem~\ref{thm-randReduct} is equivalent to showing that, with probability at least $1-\delta$:  (1) $\dim V'=k$, and (2) $O'\in V'$ iff  $O\in V$.
Note that $W_c$ is cut out in $\proj^m$ by the linear equations $\sum_{j=1}^{m} c_{i,j} X_j=0$, $i=1,\dots,k+1$.
So $W_H$ is cut out in $H\cong \proj^{m-1}$ (corresponding to $X_0=0$) by the linear equations $\sum_{j=1}^m c_{i,j} X_j=0$, $i=1,\dots,k+1$. 
We also have $\deg(V_c\cap H)\leq \deg(V_c)\leq (\max_{i\in [m]}\deg(f_i))^k$ (see, e.g., \cite[Thm.8.48]{burgisser2013algebraic}).

Assume $O\in V$. Then $O'\in V'$ since $\pi(O)=O'$. 
Applying Lemma~\ref{lem_hyperplane} to  each of the irreducible components of $V_c\cap H$ and $W_H$, as subvarieties of $H\cong\proj^{m-1}$, we see $V_c\cap W_H=(V_c\cap H)\cap W_H=\emptyset$ holds with probability at least $1-k\deg(V_c \cap H)/|S|\geq 1-\delta$.
So by Lemma~\ref{lem_projcriterion}, $\dim V'=k$ holds with probability at least $1-\delta$.

Now assume $O\not\in V$. 
Let $\pi_{O, H}: V_c\to H$ be the {\em projection of $V_c$ from $O$ to $H$}, defined by $P\mapsto \overline{OP}\cap H$ for $P\in V_c$. It is well defined since $O\not\in V_c$. 
The image $\pi_{O,H}(V_c)$ is a projective subvariety of $H$ \cite[Thm.3.5]{Har92}. 
If $V_c\cap W_c$ contains a point $P$, then $\pi_{O,H}(V_c)\cap W_H$ contains $\pi_{O,H}(P)$.
Conversely, if $\pi_{O,H}(V_c)\cap W_H$ contains a point $Q$, then there exists $P\in V_c$ such that $Q=\pi_{O,H}(P)$, and we have $P\in\overline{O Q}\subseteq W_c$.
We conclude that $\pi_{O,H}(V_c)\cap W_H=\emptyset$ iff  $V_c\cap W_c=\emptyset$, which implies  $V_c\cap W_H=\emptyset$. 

Note that $\dim \pi_{O,H}(V_c)=\dim V_c=k$, since  $\pi_{O,H}(V_c)=J(\{O\}, V_c)\cap H$.
We also have $\deg(\pi_{O,H}(V_c))\leq \deg(V_c)$ \cite[Eg.18.16]{Har92}.
Applying Lemma~\ref{lem_hyperplane} to $\pi_{O,H}(V_c)$ and $W_H$, as subvarieties of $H\cong\proj^{m-1}$, we see $\pi_{O,H}(V_c)\cap W_H=\emptyset$ holds with probability at least $1-(k+1)\deg(\pi_{O,H}(V_c))/|S|\geq 1-\delta$.

By  Lemma~\ref{lem_projcriterion} and the previous paragraphs, it holds with probability at least $1-\delta$ that $\dim V'=k$ and $O'\not\in V'$.
\end{proof}
\begin{proof}[Proof of Theorem \ref{thm-aps}]
AnnAtZero is known to be NP-hard \cite{Kay09}. The NP-hardness of APS follows from Lemma \ref{lem_geom} and Theorem \ref{thm_approx}.

Given an instance $\mathbf f$ of APS, we can first find the trdeg $k$. Fix a subset 
$S\subset \aff$ to be larger than $2(k+1)(\max_{i\in [m]}\deg(f_i))^k$ (which can be scanned using only polynomial-space). Consider the points $\left(\left(c_{i,j} \,\mid\, i\in [k+1],\, j\in [m] \right)\right)\in S^{(k+1)\times m}$; for each such point define $\mathbf g:= \big\{g_i:= \sum_{j=1}^m c_{i,j} f_j \,\mid$ $i\in[k+1] \big\}$.   
Compute the trdeg of $\mathbf g$, and if it is $k$ then solve AnnAtZero for the instance $\mathbf g$. 
Output NO iff some $\mathbf g$ failed the AnnAtZero test.

All these steps can be achieved in space polynomial in the input size, using the uniqueness of the annihilator for $\mathbf g$ \cite[Lem.7]{Kay09}, Perron's degree bound \cite{Plo05} and linear algebra \cite{C76}.	
\end{proof}

\section{Hitting-set for $\overline{\rm VP}$: Proof of Theorem \ref{thm-hsg}} \label{sec-hsg}
\vspace{-0.05in}

Suppose $p$ is a prime. Define $\aff:=\ol{\F}_p$. We want to find hitting-sets for certain polynomials in $\aff[x_1,\dots,x_n]$. Fix a $p$-power $q\ge\Omega(sr^6)$, for the given parameters $s, r$. Assume that $p\nmid(r+1)$. Also, fix a model for the finite field $\F_q$ \cite{AL86}. We now define the notion of `infinitesimally approximating' a polynomial by a small circuit.

\smallskip\noindent
{\bf Approximative closure of VP.} \cite{bringmann2017algebraic}
  A family $(f_n|n)$ of polynomials from $\aff[\mathbf x]$ is in the {\em class $\overline{\rm VP}_\aff$} if there are polynomials $f_{n,i}$ and a function $t:\mathbb{N} \mapsto \mathbb{N}$ such that $g_n$  has a poly($n$)-size poly($n$)-degree algebraic circuit, over the field $\aff(\epsilon)$, computing $g_n(\mathbf x)= f_n(\mathbf x)+ \epsilon f_{n,1}(\mathbf x)+ {\epsilon}^2 f_{n,2}(\mathbf x)+ \ldots+ {\epsilon}^{t(n)} f_{n,t(n)}(\mathbf x)$. That is, $g_n \equiv f_n \bmod{\epsilon\aff[\epsilon][\mathbf x]}$.
  
The smallest possible circuit size of $g_n$ is called the {\em approximative complexity} of $f_n$, namely $\ol{\sz}(f_n)$. 

It may happen that $g_n$ is much easier than $f_n$ in terms of traditional circuit complexity. That possibility makes the definition interesting and opens up a long line of research. 

\smallskip\noindent
{\bf Hitting-set for $\overline{\rm VP}_\aff$.}
Given functions $s=s(n)$ and $r=r(n)$, a finite subset $\calH\subset \aff^n$ is called a {\em hitting-set} for degree-$r$ polynomials of approximative complexity $s$, if for every such nonzero polynomial $f$: $\exists \mathbf v\in\calH, \, f(\mathbf v)\ne0$.

\smallskip%\noindent
{\bf Explicitness.}
We are interested in computing such a hitting-set in poly($s, \log r, \log q$)-time. 

Before our work, the best result known was EXPSPACE \cite{mul12, mulmuley2017geometric}. Heintz and Schnorr \cite{heintz1980testing} proved that poly$(s,\log qr)$-sized hitting-sets exist aplenty (for degree-$r$ $\ol{\sz}$-$s$ polynomials).

\begin{lemma}{\em \cite[Thm.4.4]{heintz1980testing}}\label{lem-hs80}
 There exists a hitting-set $\mathcal{H}\subset \F_q^n $ of size $O(s^2n^2)$ (assuming $q\geq\Omega(sr^2)$) that hits all nonzero degree-$r$ $n$-variate polynomials in $\aff[\mathbf x]$ that can be infinitesimally approximated by size-$s$ algebraic circuits. 
\end{lemma}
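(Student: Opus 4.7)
The plan is a standard probabilistic-existence argument combining a universal-circuit parameterization with Schwartz-Zippel and B\'ezout. First, introduce a universal circuit $\Psi(\mathbf{y};\mathbf{x})$ with $s' = \text{poly}(s,n,r)$ auxiliary parameters $\mathbf{y}$, so that the morphism $\phi:\aff^{s'}\to\mathcal{P}_r$, $\mathbf{y}\mapsto\Psi(\mathbf{y};\mathbf{x})$, into the coefficient space $\mathcal{P}_r := \aff[\mathbf{x}]_{\le r}$ has its image Zariski-closure $\ol{C}$ equal to the class $\ol{\rm VP}_{n,r,s}$ sitting inside $\mathcal{P}_r$ (up to taking a finite union over circuit topologies). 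Here $\ol{C}$ is a cone through the origin of dimension $d\le s'$ and degree $D$ bounded by a polynomial in $s,r$. Then the hitting-set condition for $\calH=\{\mathbf{v}_1,\dots,\mathbf{v}_h\}$ is equivalent to the geometric statement $\ol{C}\cap L_\calH = \{\mathbf{0}\}$, where $L_\calH:=\bigcap_{i=1}^h H_{\mathbf{v}_i}$ and $H_\mathbf{v}:=\{f\in\mathcal{P}_r: f(\mathbf{v})=0\}$ is the evaluation hyperplane at $\mathbf{v}$.

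The main body of the proof proceeds by induction on $i$: for i.i.d.\ uniformly random $\mathbf{v}_i\in\F_q^n$, the variety $V_i := \ol{C}\cap H_{\mathbf{v}_1}\cap\dots\cap H_{\mathbf{v}_i}$ drops in dimension by one per step with high probability. For any irreducible component $W$ of $V_{i-1}$ of positive dimension, pick a nonzero representative polynomial $g\in W$ of degree $\le r$; by Schwartz-Zippel, $\Pr[g(\mathbf{v}_i)=0]\le r/q$, so $H_{\mathbf{v}_i}$ fails to properly cut $W$ with probability $\le r/q$. B\'ezout's theorem keeps $\deg V_i\le D$ throughout, bounding the number of components by $D$; a union bound then gives $\Pr[\dim V_i\ge\dim V_{i-1}]\le Dr/q$. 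Iterating about $d+O(1)$ steps brings the intersection to a zero-dimensional set, and a further union bound over its $\le D$ remaining points (together with a few more random evaluations) eliminates all but $\{\mathbf{0}\}$. Setting $h=O(s^2n^2)$ (absorbing the dimension, degree, and a $\log$-factor slack) and $q\ge\Omega(sr^2)$ makes the total failure probability strictly less than one, so a hitting-set of the claimed size exists.

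The main obstacle is securing polynomial bounds on both $s'$ and $D$: the naive B\'ezout bound on the image degree of a morphism from $\aff^{s'}$ is exponential (of the form $D_0^{s'}$, where $D_0=\deg\phi$), so a tighter estimate is needed. A careful construction of the universal circuit (as sketched in Section~\ref{sec-hsg}, following ideas also used by \cite{bringmann2017algebraic}) controls both $s'$ and $D$ to polynomial size by exploiting the specific structure of size-$s$ circuits. The approximative aspect (VP vs.\ $\ol{\rm VP}$) is absorbed automatically by working with the Zariski closure: a polynomial $f$ lies in $\ol{\rm VP}_{n,r,s}$ iff $f\in \ol{C}$, using the characterization of $\ol{\rm VP}$ via $\aff(\epsilon)$-specializations (in the same spirit as Theorem~\ref{thm_approx}), so the geometric hitting criterion $\ol{C}\cap L_\calH=\{\mathbf{0}\}$ captures the approximative class verbatim.
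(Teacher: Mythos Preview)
The paper does not prove this lemma; it is quoted directly from \cite{heintz1980testing}, so there is no in-paper argument to compare against. Your sketch is broadly in the spirit of that original proof (parameterize by a universal circuit, pass to the Zariski closure $\overline{C}$ in coefficient space, then cut by random evaluation hyperplanes), and you are right that the approximative class is captured automatically once one works with the closure.

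However, there is a genuine gap. Your per-step analysis union-bounds over the irreducible components of $V_{i-1}$, giving failure probability at most $D\,r/q$ at each step, where $D=\deg(\overline{C})$. You then assert that ``a careful construction of the universal circuit \dots\ controls both $s'$ and $D$ to polynomial size.'' This is the unsupported step: while $s'$ (and hence $\dim\overline{C}$) is indeed polynomial, the degree $D$ of the image variety is in general exponential---the only general bound for the image of a polynomial map $\phi:\aff^{s'}\to\aff^{N}$ is $\deg(\overline{\im(\phi)})\le (\deg\phi)^{s'}$, and nothing in the references you cite brings this down to polynomial. With $D$ exponential and $q=\Theta(sr^{2})$ only polynomial, your bound $D\,r/q$ is vacuous, and the argument as written does not establish the lemma. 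The same issue recurs at your final ``union bound over its $\le D$ remaining points'' step.

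The actual Heintz--Schnorr argument sidesteps this by \emph{not} doing a per-step union bound over components. One does a global count instead: via the incidence $\{(f,\calH): f\in\overline{C}\setminus\{0\},\ f(\mathbf v_i)=0\ \forall i\}$ and the estimate $|V(\F_q)|\le\deg(V)\cdot q^{\dim V}$, the number of bad tuples $\calH\in(\F_q^n)^{h}$ is at most of order $D\cdot r^{h}\cdot q^{d+h(n-1)}$. Requiring this to fall below $q^{nh}$ forces only $h\gtrsim d+\log_{q/r} D$; since $\log D$ is polynomial even though $D$ itself is exponential, a polynomial $h$ together with $q\ge\Omega(sr^{2})$ suffices. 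In short, the fix is not a better universal circuit but a different accounting: let the exponential decay $(r/q)^{h}$ from $h$ independent evaluations absorb the exponential $D$, rather than asking each single hyperplane to succeed against all $D$ components simultaneously.
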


Note that for the hitting-set design problem it suffices to focus only on homogeneous polynomials. They are known to be computable by homogeneous circuits, where each gate computes a homogeneous polynomial (see \cite{shpilka2010arithmetic}).

\smallskip\noindent
{\bf Universal circuit.}
It can simulate any circuit of size-$s$ computing a degree-$r$ homogeneous polynomial in $\aff(\epsilon)[x_1,\ldots,x_n]$. We define the {\em universal circuit} $\Psi(\mathbf{y},\mathbf{x})$ as a circuit in $n$ essential variables $\mathbf{x}$ and $s':=O(sr^4)$ auxiliary variables $\mathbf y$. The variables $\mathbf y$ are the ones that one can specialize in $\aff(\epsilon)$, to compute a specific polynomial in $\aff(\epsilon)[x_1,\ldots,x_n]$. Every specialization gives a homogeneous degree-$r$ $\ol{\sz}$-$s'$ polynomial. Moreover, the set of these polynomials is closed under constant multiples (see \cite[Thm.2.2]{forbes2017pspace}).

Note that by \cite{heintz1980testing} there is a hitting-set, with $m:=O(s'^2n^2)$ points in $\F_q^n$ ($\because q\ge\Omega(s'r^2)$), for the set of polynomials $\calP$ approximated by the specializations of $\Psi(\mathbf{y},\mathbf{x})$.
A universal circuit construction can be found in \cite{raz2008elusive, shpilka2010arithmetic}. 
Using the above notation, we give a criterion to decide whether a candidate set is a hitting-set. 

\begin{theorem}[hs criterion]\label{thm_HITTING}
Set $\calH=:\{\mathbf v_1,\ldots, \mathbf v_m\}$ $\subset \F_q^n$ is {\em not} a hitting-set for the family of polynomials $\calP$ iff there is a satisfying assignment $(\alpha, \beta)\,\in \aff(\epsilon)^{s'}\times \aff(\epsilon)^{n}$ such that:

\smallskip\noindent
(1) $\forall i \in [n],\, {\beta_i}^{r+1}- 1 \,\in \epsilon\aff[\epsilon]$, and

\noindent 
(2) $\Psi(\alpha, \beta) - 1 \,\in \epsilon\aff[\epsilon]$, and

\noindent
(3) $\forall i \in [m],\, \Psi(\alpha,\mathbf{v}_i) \,\in \epsilon\aff[\epsilon]$.
\end{theorem}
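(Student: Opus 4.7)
The plan is to prove the two directions of the equivalence, both via an $\epsilon$-adic analysis. First observe that condition~(1) forces each $\beta_i$ to be an element of $\aff[\epsilon]$ whose constant term $\zeta_i\in\aff$ is a genuine $(r+1)$-th root of unity in $Z_{r+1}$ (writing $\beta_i=f/g$ in lowest terms, $\beta_i^{r+1}\in\aff[\epsilon]$ forces $g\in\aff^*$, after which $\beta_i^{r+1}\equiv 1\pmod{\epsilon}$ picks out the constant term $\zeta_i\in Z_{r+1}$; recall $p\nmid(r+1)$, so $|Z_{r+1}|=r+1$). Write $\mathrm{ord}_\epsilon$ for the $\epsilon$-adic valuation on $\aff((\epsilon))$, extended to $\aff((\epsilon))[\mathbf{x}]$ by taking the minimum over coefficients.

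For the forward direction, suppose $\calH$ is not a hitting-set. Pick a nonzero $f\in\calP$ vanishing on $\calH$. Since $f$ has total degree $r$ and $|Z_{r+1}|=r+1>r$, Schwartz--Zippel produces some $\beta\in Z_{r+1}^n$ with $a:=f(\beta)\ne 0$. By closure of $\calP$ under $\aff^*$-scalar multiplication, $a^{-1}f\in\calP$, so the universal-circuit property supplies $\alpha\in\aff(\epsilon)^{s'}$ with $\Psi(\alpha,\mathbf{x})\equiv a^{-1}f(\mathbf{x})\pmod{\epsilon\aff[\epsilon][\mathbf{x}]}$. Taking this $(\alpha,\beta)$: condition~(1) holds because $\beta_i\in Z_{r+1}$; (2) because $\Psi(\alpha,\beta)\equiv a^{-1}\cdot a=1\pmod{\epsilon}$; and (3) because $f(\mathbf{v}_i)=0$.

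For the reverse direction, given $(\alpha,\beta)$ satisfying (1)--(3), set $g(\mathbf{x}):=\Psi(\alpha,\mathbf{x})$ and $M:=-\mathrm{ord}_\epsilon(g)$. The crucial observation is $M\ge 0$: by (1) every $\beta_i$ has nonzero constant term, so each monomial $\beta^{\mathbf{j}}$ satisfies $\mathrm{ord}_\epsilon(\beta^{\mathbf{j}})=0$, whence $\mathrm{ord}_\epsilon(g(\beta))\ge \mathrm{ord}_\epsilon(g)$; but (2) gives $\mathrm{ord}_\epsilon(g(\beta))=0$. Consequently $\epsilon^M g\in\aff[\epsilon][\mathbf{x}]$, so $\tilde g(\mathbf{x}):=\epsilon^M g(\mathbf{x})\bmod\epsilon$ is a well-defined nonzero homogeneous polynomial of degree $r$ in $\aff[\mathbf{x}]$. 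It lies in $\calP$ since $\epsilon^M\Psi(\alpha,\mathbf{x})$ is obtained from $\Psi$ by an $\aff(\epsilon)^*$-rescaling, which preserves membership in the class of degree-$r$ approximative-size-$s'$ polynomials, and $\tilde g$ is its $\epsilon\to 0$ limit. Finally, (3) gives $\mathrm{ord}_\epsilon(g(\mathbf{v}_i))\ge 1$, so $\mathrm{ord}_\epsilon(\epsilon^M g(\mathbf{v}_i))\ge M+1\ge 1$, yielding $\tilde g(\mathbf{v}_i)=0$. Thus $\tilde g\in\calP$ is nonzero and vanishes on $\calH$, witnessing that $\calH$ fails to be a hitting-set.

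The main technical delicacy is the $\epsilon$-order bookkeeping: the auxiliary parameters $\alpha$ live only in $\aff(\epsilon)^{s'}$, so $g=\Psi(\alpha,\mathbf{x})$ may \emph{a priori} have poles in $\epsilon$; conditions (1) and (2) together must be shown to forbid this (equivalently, to force $M\ge 0$), while condition (3) has to propagate through multiplication by $\epsilon^M$ to force $\tilde g$ to vanish on $\calH$. A secondary subtlety is verifying that the rescaled-and-reduced polynomial $\tilde g$ still belongs to $\calP$, which relies on closure of the class under $\aff(\epsilon)^*$-scaling, a standard feature of universal circuits for $\overline{\rm VP}$.
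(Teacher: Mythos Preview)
Your proof is correct and follows essentially the same approach as the paper: both deduce from (1) that each $\beta_i$ reduces modulo $\epsilon$ to an $(r+1)$-th root of unity, handle the forward direction via Schwartz--Zippel on $Z_{r+1}^n$ followed by a scalar normalization absorbed into $\alpha$, and handle the reverse direction by extracting the leading $\epsilon$-coefficient of $\Psi(\alpha,\mathbf{x})$ (your $\tilde g$ is exactly the paper's $g_{-\ell}$) after using (1)--(2) to guarantee that this leading order is $\le 0$ so that condition (3) survives the $\epsilon^M$-rescaling. One minor slip in your closing summary: you write that (1)--(2) ``forbid poles'' in $g$, but in fact $M\ge 0$ means $\mathrm{ord}_\epsilon(g)\le 0$, which \emph{permits} poles; this is precisely what your main argument correctly establishes and uses.
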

{\bf Remark--} The above criterion holds for algebraically closed fields $\aff$ of {\em any} characteristic. Thus, it reduces those hitting-set design problems to APS as well.
\begin{proof}
First we show that: $\exists x\in \aff(\epsilon),\, {x}^{r+1}- 1 \,\in \epsilon\aff[\epsilon]$ implies $x\in \aff[[\epsilon]]\cap \aff(\epsilon)$ (= rational functions defined at $\epsilon=0$). 

Recall the formal power series $\aff[[\epsilon]]$ and its group of units $\aff[[\epsilon]]^*$.
Note that for any polynomial $a=\big(\sum_{i_0\le i\le d} a_i\epsilon^i\big)$ with $a_{i_0}\ne0$, the inverse $a^{-1} = \epsilon^{-i_0}\cdot \big(\sum_{i_0\le i\le d} a_i\epsilon^{i-i_0}\big)^{-1}$ is in $\epsilon^{-i_0}\cdot \aff[[\epsilon]]^*$. This is just a consequence of the identity $(1-\epsilon)^{-1} = \sum_{i\ge0} \epsilon^i$. In other words, any rational function $a\in \aff(\epsilon)$ can be written as an element in $\epsilon^{-i}\aff[[\epsilon]]^*$, for some $i\ge0$. Thus, write $x$ as $\epsilon^{-i}\cdot(b_0+b_1\epsilon+\cdots)$ for $i\ge0$ and $b_0\in\aff^*$. This gives
$${x}^{r+1}- 1 \,=\, \epsilon^{-i(r+1)}(b_0+b_1\epsilon+ b_2\epsilon^{2} +\cdots)^{r+1} \;-\; 1 \,.$$
For this to be in $\epsilon\aff[\epsilon]$, clearly $i$ has to be $0$ (otherwise, $\epsilon^{-i(r+1)}$ remains uncancelled); implying that $x\in \aff[[\epsilon]]$. 

Moreover, we deduce that $b_0^{r+1} - 1 = 0$. Thus, condition (1) implies that $b_0$ is one of the $(r+1)$-th roots of unity $Z_{r+1}\subset\aff$ (recall that, since $p\nmid(r+1)$, $|Z_{r+1}|=r+1$). Thus, $x\in \, Z_{r+1} + \epsilon\aff[[\epsilon]]$.

\smallskip{\em [$\Rightarrow$]:}
Suppose $\mathcal{H}$ is not a hitting-set for $\calP$. Then, there is a specialization $\alpha \in \aff(\epsilon)^{s'}$ of the universal circuit such that $\Psi(\alpha, \mathbf x)$ computes a polynomial in $\aff[\epsilon][\mathbf x]\setminus \epsilon\aff[\epsilon][\mathbf x]$, but still `fools' $\calH$, i.e.:
$\forall i \in [m],\, \Psi(\alpha,\mathbf{v}_i) \,\in \epsilon\aff[\epsilon]$. 
What remains to show is that conditions (1) and (2) can be satisfied too.

Consider the polynomial $g(\mathbf x):= \Psi(\alpha, \mathbf x)\big\vert_{\epsilon=0}$. It is a nonzero polynomial, in $\aff[\mathbf x]$ of degree-$r$, that `fools' $\calH$. By \cite{Sch80}, there is a $\beta\in Z_{r+1}^n$ such that $a:= g(\beta)$ is in $\aff^*$. Clearly, $\beta_i^{r+1} - 1 = 0$, for all $i$. Consider $\psi':= a^{-1}\cdot\Psi(\alpha, \mathbf x)$. Note that $\psi'(\beta)-1 \in \epsilon\aff[\epsilon]$, and $\psi'(\mathbf{v}_i) \,\in \epsilon\aff[\epsilon]$ for all $i$. Moreover, the normalized polynomial $\psi'(\mathbf x)$ can easily be obtained from the universal circuit $\Psi$ by changing one of the coordinates of $\alpha$ (eg.~the incoming wires of the root of the circuit). This means that the three conditions (1)-(3) can be simultaneously satisfied by (some) $(\alpha',\beta)\in\, \aff(\epsilon)^{s'}\times Z_{r+1}^n$.

\smallskip{\em [$\Leftarrow$]:}
Suppose the satisfying assignment is $(\alpha, \beta')\,\in \aff(\epsilon)^{s'}\times \aff(\epsilon)^{n}$. As shown before, condition (1) implies:  $\beta'_i\in \, Z_{r+1} + \epsilon\aff[[\epsilon]]$ for all $i\in[n]$. Let us define $\beta_i:= \beta'_i\big\vert_{\epsilon=0}$, for all $i\in[n]$; they are in $Z_{r+1}\subset\aff$. 
By Condition (3): $\forall i \in [m],\, \Psi(\alpha,\mathbf{v}_i) \,\in \epsilon\aff[\epsilon]$.

Previous calculations suggest that $\Psi(\alpha,\mathbf x)$ is in 
$\epsilon^{-j}\aff[[\epsilon]][\mathbf x]$, for some $j\ge0$.
Expand the polynomial $\Psi(\alpha,\mathbf x)$, wrt $\epsilon$, as: 
$$g_{-j}(\mathbf x)\epsilon^{-j} + \dots + \epsilon^{-2} g_{-2}(\mathbf x) + g_{-1}(\mathbf x)\epsilon^{-1} + g_0(\mathbf x) + \epsilon g_1(\mathbf x) + \epsilon^2 g_2(\mathbf x) + \dots \,.$$
Let us study Condition (2). If for each $0\le\ell\le j$, polynomial $g_{-\ell}(\mathbf x)$ is zero, then $\Psi(\alpha,\beta')\big\vert_{\epsilon=0} =0$ contradicting the condition. Thus, we can pick the largest $0\le\ell\le j$ such that the polynomial $g_{-\ell}(\mathbf x)\ne0$. %If $\ell=0$ then $\Psi(\alpha,\mathbf x)$ is approximating $g_0$, which is a nonzero polynomial fooling $\calH$; hence, $\calH$ cannot be a hitting-set, and we are done. So, assume $\ell\in[j]$. 
%Then, Condition (2) implies that $g_{-\ell}(\beta)=0$ (otherwise, $\epsilon^{-\ell}$ remains uncancelled and undefined for $\epsilon=0$).

Note that the normalized circuit $\epsilon^\ell\cdot \Psi(\alpha,\mathbf x)$ equals $g_{-\ell}$ at $\epsilon=0$. This means that $g_{-\ell}\in\calP$, and it is a nonzero polynomial fooling $\calH$. Thus, $\calH$ cannot be a hitting-set for $\calP$ and we are done.
\end{proof}

\begin{proof}[Proof of Theorem \ref{thm-hsg}]
Given a prime $p$ and parameters $n, r, s$ in unary (wlog $p\nmid(r+1)$), fix a field $\F_q$ with $q\ge\Omega(sr^6)$. Fix the universal circuit $\Psi(\mathbf{y},\mathbf{x})$ with $n$ essential variables $\mathbf{x}$ and $s':=\Omega(sr^4)$ auxiliary variables $\mathbf y$. Fix $m:=\Omega(s'^2n^2)$.

For every subset $\calH=:\{\mathbf v_1,\ldots, \mathbf v_m\}$ $\subset \F_q^n$ solve the APS instance described by Conditions (1)-(3) in Theorem \ref{thm_HITTING}. These are $(n+m+1)$ algebraic circuits of degree poly($srn, \log p$) and a similar bitsize. Using the algorithm from Theorem \ref{thm-aps} it can be solved in poly($srn, \log p$)-space.

The number of subsets $\calH$ is $q^{nm}$. So, in poly($nm\log q$)-space we can go over all of them. If APS fails on one of them (say $\calH$) then we know that $\calH$ is a hitting-set for $\calP$. Since $\Psi$ is universal, for homogeneous degree-$r$ $\ol{\sz}$-$s$ polynomials in $\aff[\mathbf x]$, we output $\calH$ as the desired hitting-set.
\end{proof}

\vspace{-0.1in}
\section{Conclusion}\label{sec-conclusion}
\vspace{-0.1in} 

Our result on algebraic dependence testing in AM $\cap$ coAM gives further indication that a randomized polynomial time algorithm for the problem exists. Studying the following special case might be helpful to get an idea for designing better algorithms.

  Given quadratic polynomials $f_1,\ldots,f_n \in \F_2[x_1,\ldots,x_n]$, test if they are algebraically dependent in randomized polynomial time \cite{pandey2016algebraic}.

\smallskip
 As indicated in this paper, approximate polynomials satisfiability, or equivalently testing zero-membership in the Zariski closure of the image, may have further applications to problems in computational algebraic geometry and algebraic complexity. 

We know that HN is in AM over characteristic zero fields, assuming GRH \cite{koiran1996hilbert}. Can we solve  AnnAtZero (or APS) in AM for characteristic zero fields assuming GRH?) \cite{Kay09}? This would also imply better hitting-set construction for $\ol{\rm VP}$.

\bigskip\noindent
{\bf Acknowledgements.}
We thank Anurag Pandey and Sumanta Ghosh for insightful discussions on the approximate polynomials satisfiability and the hitting-set construction problems. 
N.S.~thanks the funding support from DST (DST/SJF/MSA-01/2013-14).
Z.G.~is funded by DST and Research I Foundation of CSE, IITK.

\bibliographystyle{alpha} 
\bibliography{reference}

\appendix

\section{From Section \ref{sec-prelim}: Algebraic-Geometry  }\label{app-AG}

Let $\aff:=\ol{\F}$ be the algebraic closure of a field $\F$. For $d\in\N^+$, write $\aff^d$ for the {\em $d$-dimensional affine space} over $\aff$. It is defined to be the set $\aff^d$, equipped with the {\em Zariski topology}, defined as follows: A subset $S$ of $\aff^d$ is {\em closed} iff it is the set of common zeros of some subset of polynomials in $\aff[X_1,\dots,X_d]$. For other subsets $S$ it makes sense to consider the {\em closure} $\ol{S}$-- the smallest closed set containing $S$. Set $S$ is {\em dense} if $\ol{S}=\aff^d$.  Complement of closed sets are called {\em open}. 

A closed set is called a {\em hypersurface} (resp.~{\em hyperplane}) if it is definable by a single polynomial (resp.~single linear polynomial). 

Define $\aff^\times:= \aff\setminus\{0\}$.
Write $\proj^d$ for the $d$-dimensional projective space over $\aff$, defined to be the quotient set $(\aff^{d+1}\setminus\{(0,\dots,0)\})/\sim$. Where $(x_0,\dots,x_d)\sim (y_0,\dots,y_d)$ iff there exists $c\in\aff^\times$ such that $y_i=c x_i$ for $0\leq i\leq d$. The set $\proj^d$ is again equipped with the {\em Zariski topology}, where a subset is closed iff it is the set of common zeros of some subset of {\em homogeneous} polynomials in  $\aff[X_0,\dots,X_d]$. We use $(d+1)$-tuples $(x_0,\dots,x_d)$ to represent points in $\proj^d$.

\smallskip
Closed subsets of $\aff^d$ or $\proj^d$ are also called {\em algebraic sets} or {\em zerosets}. 
An algebraic set is {\em irreducible} if it cannot be written as the union of finitely many proper algebraic sets. An irreducible algebraic subset of an affine (resp. projective) space is also called an {\em affine variety} (resp. {\em projective variety}). (In some references, varieties are not required to be irreducible, but in this work we always assume it.)
An algebraic set $V$ can be uniquely represented as the union of finitely many varieties, and these varieties are called the {\em irreducible components} of $V$.

Affine zerosets (resp.~varieties) are in 1-1 correspondence with {\em radical} (resp.~{\em prime}) ideals.
Irreducible decomposition of an affine variety mirrors the factoring of an ideal into primary ideals.  Finally, note that the affine points are in 1-1 correspondence with {\em maximal} ideals; it is a simple reformulation of Hilbert's Nullstellensatz.

\smallskip
The affine space $\aff^d$ may be regarded as a subset of $\proj^d$ via the map $(x_1,\dots,x_d)\mapsto (1,x_1,\dots,x_d)$. 
Then the subspace topology of $\aff^d$ induced from the Zariski topology of $\proj^d$ is just the Zariski topology of $\aff^d$.
The set $\proj^d\setminus \aff^d$ is the projective subspace of $\proj^d$ defined by $X_0=0$, called the {\em hyperplane at infinity}.
 
For an algebraic subset $V$ of $\aff^d\subseteq\proj^d$, the smallest algebraic subset $V'$ of $\proj^d$ containing $V$ (i.e. the intersection of all algebraic subsets containing $V$) is the {\em projective closure} of $V$, and we have $V'\cap \aff^d=V$. To see this, note that for $P=(x_1,\dots,x_d)\in \aff^d\setminus V$, there exists a polynomial $Q\in\aff[X_1,\dots,X_d]$ of degree $D\in\N$ not vanishing on $P$ (but vanishing on $V$). Then its homogenization $Q'\in\aff[X_0,\dots,X_d]$, defined by replacing each monomial $M=\prod_{i=1}^d X_i^{d_i}$ by  $X_0^{D-\deg(M)}\prod_{i=1}^d X_i^{d_i}$, does not vanish on $(1,x_1,\dots,x_d)$. So, $(1,\mathbf x)\notin V'$.

\smallskip
For distinct points $P=(x_0,\dots,x_d), Q=(y_0,\dots,y_d)\in\proj^d$, write $\overline{PQ}$ for the {\em projective line} passing through them, i.e., $\overline{PQ}$ consists of the points $(ux_0+vy_0,\dots, ux_d+vy_d)$, where $(u,v)\in\aff^2\setminus\{(0,0)\}$. 

\smallskip
The {\em dimension} of a variety $V$ is defined to be the largest integer $m$ such that there exists a chain of varieties $\emptyset\subsetneq V_0\subsetneq V_1\subsetneq\cdots\subsetneq V_m=V$. More generally, the dimension of an algebraic set $V$, denoted by $\dim V$, is the maximal dimension of its irreducible components. Eg.~we have $\dim\aff^d=\dim\proj^d=d$. 
The dimension of  the  empty set is $-1$ by convention. One dimensional varieties are called {\em curves}.

\smallskip
The {\em degree} of a variety $V$ in $\aff^d$ (resp.~$\proj^d$) is the number of intersections of $V$ with a general affine subspace (resp.~projective subspace) of dimension $d-\dim V$. 
More generally, the degree of an algebraic set $V$, denoted by $\deg(V)$,  is the sum of the degrees of its irreducible components.
The degree of an algebraic subset of $\aff^d$ coincides with the degree of its projective closure in $\proj^d$.

\smallskip
Suppose $V\subseteq\aff^d$ is an algebraic set, defined by polynomials $f_1,\dots,f_k$. Let $(a_1,\dots,a_d)\in\aff^d$. Then the set $\{(x_1+a_1,\dots,x_d+a_d): (x_1,\dots,x_d)\in V\}$ is called a {\em translate} of $V$. It is also an algebraic set, defined by $f_i(X_1-a_1,\dots,X_d-a_d)$, $i=1,\dots,k$.

\smallskip
Let $V\subseteq \aff^n$, $W\subseteq\aff^m$ be affine varieties. A {\em morphism} from $V$ to $W$ is a function $f:V\to W$ that is a restriction of a polynomial map $\aff^n\to\aff^m$.
A morphism $f: V\to W$ is called {\em dominant} if $\overline{\im(f)}=W$.
The preimage of a closed subset under a morphism is closed (i.e. morphisms are {\em continuous} in the Zariski topology).

For a polynomial map $f:\aff^n\to\aff^m$ and an affine variety $V\subseteq \aff^n$, $W:=\overline{f(V)}$ is also an affine variety (i.e., it is irreducible). To see this, assume to the contrary that $W$ is the union of two proper closed subsets $W_1$ and $W_2$. By the definition of closure, $f(V)$ is not contained in either $W_1$ or $W_2$, i.e., it intersects both. Then $f^{-1}(W_1)\cap V$ and $f^{-1}(W_2)\cap V$ are two proper closed subsets of $V$, and their union is $V$. This contradicts the irreducibility of $V$.

\smallskip
The {\em graph} $\Gamma_f$ of a morphism $f$ is the set $\{(x,f(x)): x\in V\}\subseteq V\times W\subseteq \aff^n\times \aff^m$.
Here $V\times W=\{(x,y): x\in V,y\in W\}$ denotes the {\em product} of $V$ and $W$, which is a subvariety of the $(n+m)$-dimensional affine space $\aff^n\times \aff^m\cong \aff^{n+m}$.
Note the graph $\Gamma_f$ is closed in $\aff^n\times \aff^m$: 
Suppose $f$ sends $x\in V$ to $(f_1(x),\dots,f_m(x))\in\aff^m$, where $f_i\in\aff[X_1,\dots,X_n]$ for $i\in [m]$.
And suppose $V$ and $W$ are defined by ideals $I\subseteq \aff[X_1,\dots,X_n]$ and $I'\subseteq \aff[Y_1,\dots,Y_m]$ respectively. Then $\Gamma_f$ is defined by $I$, $I'$, and the polynomials $Y_i-f_i(X_1,\dots,X_n)\in \aff[X_1,\dots,X_n,Y_1,\dots,Y_m]$, $i=1,\dots,m$.

\section{From Section \ref{sec:aps}}

  \begin{exmp}\label{exmp_reduction}
Let $m=4$, $(f_1,f_2,f_3,f_4)=(X_1,X_2,X_1 X_2-1, X_1+X_2)$. Then $k:=\text{trdeg}\mathbf f= 2$.
  Let $(g_1,g_2,g_3)=(f_1,f_3,f_1+f_2-f_4)=(X_1,X_1 X_2-1,0)$.
  Suppose $\aff^m$ has coordinates $Y_1,\dots,Y_4$ and $\aff^{k+1}$ has coordinates $Z_1,\dots,Z_3$.

Then $V\subseteq\aff^m$ is defined by $Y_1Y_2-Y_3-1=0$ and $Y_1+Y_2-Y_4=0$, and $W$ is defined by $Y_1=0$, $Y_3=0$, and $Y_2-Y_4=0$.
So $V\cap W=\emptyset$.  But $V'\subseteq\aff^{k+1}$ is the plane $Z_3=0$, which contains the origin.
  \end{exmp}

\begin{exmp}\label{exmp-projcrit}
Consider Example~\ref{exmp_reduction} but choose $f_4$ to be $X_1+X_2+1$ instead of $X_1+X_2$. Now we have $g_3=1$, $V$ is defined by $Y_1Y_2-Y_3-1=0$ and $Y_1+Y_2-Y_4+1=0$, and $V'$ is the plane $Z_3=1$. So $O'\not\in V'$. 

On the other hand, suppose $\proj^m$ has coordinates $Y_0,\dots,Y_4$. Then $V_c\cap H$ is defined by $Y_0=Y_1Y_2=Y_1+Y_2-Y_4=0$, and $W_H$ is defined by $Y_0=Y_1=Y_2-Y_4=Y_3=0$. So $(0,0,1,0,1)\in V_c\cap W_H\subseteq V_c\cap W_c$.
\end{exmp}

\end{document}